\newcommand{\li}{[\![}
\newcommand{\ri}{]\!]}
\newtheorem{prop}{Proposition}
\newtheorem{thm}{Theorem}
\newtheorem{prob}{Problem}
\newtheorem{eg}{Example}
\newtheorem{defn}{Definition}
\newtheorem{rmk}{Remark}
\newtheorem{lem}{Lemma}
\newcommand{\liren}[1]{{\color{black} #1}}
  \providecommand\BibTeX{{%
    \normalfont B\kern-0.5em{\scshape i\kern-0.25em b}\kern-0.8em\TeX}}}
\begin{document}

\title{Synthesis-guided Adversarial Scenario Generation for Gray-box Feedback Control Systems with Sensing Imperfections}
 
\author{Liren Yang}
\email{yliren@umich.edu}
\author{Necmiye Ozay}
\email{necmiye@umich.edu}
\affiliation{%
  \institution{University of Michigan}
  \city{Ann Arbor}
  \state{Michigan}
  \country{USA}
  \postcode{48105}
}

\thanks{This article appears as part of the ESWEEK-TECS special issue and was presented in the International Conference on Embedded Software (EMSOFT), 2021. 
DOI: \url{https://doi.org/10.1145/3477033}
}

\renewcommand{\shortauthors}{Yang and Ozay}
\renewcommand{\shorttitle}{Synthesis-guided Adversarial Scenario Generation for Gray-box Systems with Sensing Imperfections}

\begin{abstract}
 In this paper, we study  feedback dynamical systems with memoryless controllers under imperfect information. 
We develop an algorithm that searches for ``adversarial scenarios", which can be thought of as the strategy for the adversary representing the noise and disturbances, that lead to safety violations. The main challenge is to analyze the closed-loop system's vulnerabilities with a potentially complex or even unknown controller in the loop. As opposed to commonly adopted approaches that treat the system under test as a black-box, we propose a synthesis-guided approach, which leverages the knowledge of a plant model at hand. This hence leads to a way to deal with gray-box systems (i.e., with known plant and unknown controller). Our approach reveals the role of the imperfect information in the violation. Examples show that our approach can find non-trivial scenarios that are difficult to expose by random simulations.   This approach is further extended  to incorporate model mismatch and to falsify vision-in-the-loop systems  against finite-time reach-avoid specifications.
\end{abstract}

\begin{CCSXML}
<ccs2012>
<concept>
<concept_id>10010520.10010553.10010562</concept_id>
<concept_desc>Computer systems organization~Embedded systems</concept_desc>
<concept_significance>300</concept_significance>
</concept>
<concept>
<concept_id>10002944.10011123.10011676</concept_id>
<concept_desc>General and reference~Verification</concept_desc>
<concept_significance>300</concept_significance>
</concept>
</ccs2012>
\end{CCSXML}

\ccsdesc[300]{Computer systems organization~Embedded systems}
\ccsdesc[300]{General and reference~Verification}

\keywords{adversarial scenarios, imperfect information games, safety}

\renewcommand\footnotetextcopyrightpermission[1]{}
\pagestyle{plain}

\maketitle

\section{Introduction}

\begin{figure}[t]
    \includegraphics[width=0.45\textwidth]{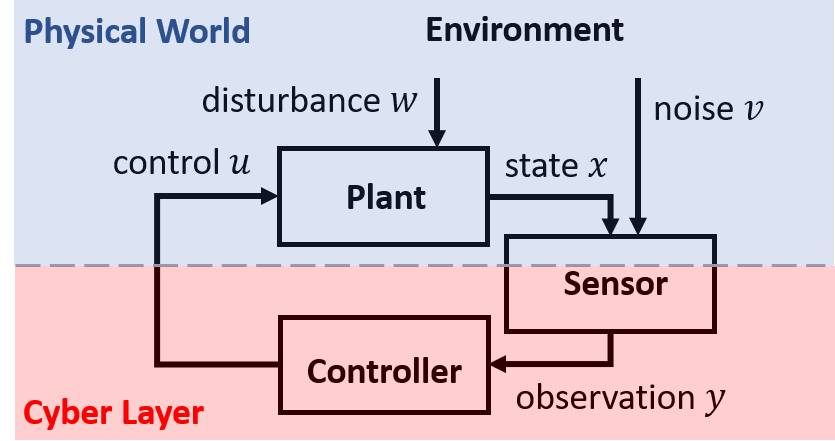}
	\vspace{-0mm}
      \caption{Block diagram of a closed-loop control system.}
    \label{fig:blk_0}
\end{figure} 
Cyber-physical systems (CPS) combine physical processes and computational (cyber) elements to perform complicated tasks in a dynamic environment.  
From a control perspective, a simplified CPS can be described by the block diagram in Fig. \ref{fig:blk_0}, where the plant represents the physical process, and the controller makes decisions at the cyber layer. 
The disturbance input $w$ encompasses environment parameters, external events and other agents' actions that directly act on the plant, whereas the input $v$ captures the noise and data loss, which lead to imperfect information $y$ and affect the controller's decision. 
The controller's goal is to determine control actions $u$ so that the plant's state trajectories satisfy some high-level specifications under all possible environmental uncertainties.

One key challenge in the CPS research is, given a control design, how to prove that it achieves desired closed-loop behavior. 
This question is particularly important for safety-critical systems. 
Due to the complexity of the specification and the environment, the controller may be complicated even if the plant has modest complexity and may include optimization, learning and vision based components.  The formal verification of the closed-loop system is hence a difficult task. 
Instead of verification, one alternative is to start with the plant model and close the loop by directly synthesizing a correct-by-construction controller, which achieves the specification provably. 
However, the synthesis problem is also challenging, especially when the controller does not have perfect information of the state of the world \cite{de2006lattice}, \cite{majumdar2020abstraction}, \cite{yang2020efficient}. For the cases where a vision-based solution must be used, such task is even harder. 
Other than synthesis, a more practical alternative is to quickly find scenarios under which the system trajectory violates the specification, i.e., to falsify the system. 
These adversarial scenarios are valuable for understanding the shortcomings of the controller at  early design stages, which may be hard to expose by random simulations. 
Moreover, once found, these adversarial scenarios can be used to improve the design, e.g., see \cite{fremont2020formal}, \cite{ghosh2019counterexample},  \cite{wang2020falsification}.

The problem of finding adversarial scenarios (a.k.a. falsification) for CPS has attracted much attention in the past two decades.  
Many approaches and tools are developed using stochastic search and optimization techniques, e.g., random tree search \cite{bhatia2004incremental}, \cite{ernst2019fast}, \cite{kim2005rrt}, \cite{nahhal2007test}, \cite{plaku2009falsification}, specification-guided stochastic sampling \cite{annpureddy2011s}, Tabu search \cite{deshmukh2015stochastic}, Bayesian optimization  \cite{deshmukh2017testing}, \cite{ghosh2018verifying}, nonlinear simplex optimization \cite{donze2010breach}. 
These works all treat the system under test as a black-box to avoid its high complexity, and search for a specification violation at the system level.  
Notably, some recent works \cite{dreossi2019compositional}, \cite{dreossi2019verifai}, \cite{tuncali2018simulation} study systems with vision/learning-based controllers in the loop. 
In particular, \cite{tuncali2018simulation} searches for a physical environment configuration (encoded by $w$) at the system level, whereas \cite{dreossi2019compositional} uses a decomposition idea and focuses on the impact of imperfect information (encoded by $v$) to the learning component in the cyber layer, which is equivalently important.

One closely related domain is adversarial learning, which includes a line of research that analyzes the vulnerability and robustness of machined learning algorithms against malicious agents. 
It is shown that inputs carefully crafted  by adding small perturbations unnoticeable to human eyes can fool well-trained neural network classifiers \cite{dalvi2004adversarial}, \cite{goodfellow2014explaining}, \cite{moosavi2016deepfool}, \cite{szegedy2013intriguing} or policies obtained by reinforcement learning \cite{huang2017adversarial}, \cite{kos2017delving}, and may harm the performance of practical CPS  (e.g., autonomous cars)  with deep neural networks in the loop \cite{pei2017deepxplore}. 
In the context of CPS falsification, these works focus on shifting the output of the learning-based controller by twisting the noise $v$ adversarially.  
As pointed out in \cite{tuncali2018simulation}, such attacks are at component level and do not necessarily lead to specification violations at the system level. This is because i) the effectiveness of the attack relies on the learning module being well-trained, and ii) a deviation in the learning algorithm's output does not necessarily lead to bad behaviors (e.g., to avoid an obstacle in the front, it is sometimes equivalently good to turn left or right). 
Adversarial learning methods usually require the component under analysis to be a white-box, and the black-box attacks are usually based on the principle of transferability \cite{bose2020adversarial}, \cite{huang2017adversarial}.  
More recently, a slightly different work \cite{gleave2019adversarial} presents a way to falsify reinforcement learning policies at the system level using  the disturbance $w$, which captures the move of an adversarial agent. However, the form of the entire closed-loop system must be known.

In this paper, we ask the following question: when there is a control-oriented plant model with modest complexity at hand, yet the closed-loop system is still hard to formally analyze due to highly complex controllers, is there a way to leverage the knowledge of this simple plant model and quickly identify non-trivial safety violations at the system level, where both the disturbance $w$ and the noise $v$ are essential for the violations to occur?  
Here, by a ``simple plant'', we mean any open-loop system (e.g., switched-affine systems) for which a two-player safety game can be solved relatively efficiently. 
By a ``complex controller'', we mean any controller  (e.g., hybrid MPC, neural network controllers or black-box controllers) that makes it challenging to compute the backward reachable set for the closed-loop system.  
This problem is motivated by systems such as autonomous cars, which have relatively simple motion dynamics but complex controllers and imperfect sensors (e.g., a camera). 
We refer to such systems with known plant models and unknown controllers as gray-box systems. 
Despite the same terminology, our setting is different  from that in \cite{yaghoubi2019gray}, where the term ``gray-box system'' refers to an unknown system whose closed-loop model can be approximated with a local linear model constructed on the fly in the falsification process. 
Another related work is \cite{waga2020falsification}, where a finite abstraction (Mealy machine) is learned from a CPS, and adversarial scenarios of the CPS is searched via model checking on the abstraction.  
The main difference of our setting is the separation of the plant and the controller. 
Table \ref{tab:compare} compares our setting and the aforementioned works. 

We focus on safety specifications and static output-feedback controllers, and propose a synthesis-guided falsification approach to find adversarial scenarios\footnote{The term ``falsification" and ``adversarial scenario generation" are used interchangeably from now on.}. The key idea is to inner-approximate the closed-loop system's backward reachable set under the guidance of a local two-player game's solution, whose complexity is independent of the controller. 
We further approach the falsification of vision-in-the-loop systems from such imperfect information game standpoint and show that a semantic space search reduction similar to \cite{dreossi2019compositional}, \cite{tuncali2018simulation} can be used in our framework.

\begin{table}[t]
  \caption{Adversarial Learning vs. Falsification Comparison}
  \label{tab:compare}
\centering
{\small	\begin{tabular}{c || c | c | c}
	\hline
	  & White-box &  Falsifying &  Falsification \\
	  & Components  &  Input &  Level\\
	\hline
	\hline
	 Adversarial & controller & $v$ & component \\
      \cline{2-4} 
	\  Learning &\cite{gleave2019adversarial}: plant \& controller \ & $w$  &  system \\
	\hline
	 Falsification & $-$ & $w$ \& $v$ & system \\
	\hline
	This Paper & plant & $w$ \& $v$  & system \\
	\hline
	\end{tabular} 
}
\end{table} 

In the rest of the paper, we first review the problems of falsification and synthesis under full information in Section \ref{sec:prelim}, and discuss the falsification problem under imperfect information in Section \ref{sec:prob}. Then we present the following main contributions of our work. 
\begin{itemize}
\item In Section \ref{sec:sol}, we propose a synthesis-guided falsification framework to find adversarial scenarios under our gray-box settings. In this framework, different backward reachability analysis techniques can be applied (depending on the structural properties the plant has). We provide two implementations of our approach for plants with switched affine dynamics, one uses MPT3 toolbox \cite{MPT3} (works for systems up to 4D),  and the other is based on zonotope computation techniques developed in \cite{sadraddini2019linear} (works for systems of 10D or even higher). 
\item In Section \ref{sec:ext}, we extend our approach to incorporate model mismatch (Section \ref{sec:ext_mismatch}) and to falsify vision-in-the-loop systems (Section \ref{sec:ext_vision}) against finite-time reach and avoid specifications (Section \ref{sec:ext_reach}).  
\item In Section \ref{sec:eval}, we evaluate our approach and its extensions with several examples. 
We show that the approach can falsify black-box controllers, including hybrid model predictive controllers (MPC) and  feedforward neural network controllers, by running the same algorithm blindly. 
To demonstrate the scalability and effectiveness, we also test our approach against randomly generated problem instances. Experiments show that the obtained scenarios are nontrivial. 
\end{itemize}

\section{Preliminaries}\label{sec:prelim}

\textbf{Notations}. 
Let $S$ be a set, we use $S^\ast$ and $S^\omega$ to denote the set of finite and infinite sequences of elements in $S$, respectively. The set of finite sequences in $S^\ast$ of length $T$ is denoted by $S^T$, and the empty sequence of length zero is denoted by $\varepsilon$. 
Throughout the paper,  we will use bold font letters, e.g., $\textbf{s}$, to represent infinite sequences. 
We denote by $s_t$ the $(t+1)^{\rm th}$ element ($t$ starts from $0$) of a sequence $\textbf{s}$, and define $\textbf{s}_t = s_0 s_1 s_2 \dots s_t$ to be the prefix of the sequence $\textbf{s}$ until time $t$.
Let $\mathbb{Z}$ be the set of integers,  given $a\in \mathbb{Z}\cup\{-\infty\}$ and $b\in \mathbb{Z}\cup\{\infty\}$ such that  $a\leq b$, the set $\{s \in \mathbb{Z}\mid a\leq s \leq b\}$ is denoted by $\li a,b\ri$.

\subsection{Falsification \& Control Synthesis under Perfect Information}\label{sec:prelim_game}
Consider a discrete-time  time-invariant system $\overline{\Sigma}: x_{t+1} = f(x_t, u_t, w_t)$ 
where $x \in X$ is the state, $u\in U$ is the control input, $w \in W$ is the disturbance input (or process noise), 
and $f: X\times U \times W \rightarrow X$ is the transition map. 
A sequence $\textbf{x} \in X^\omega$ is a trajectory of the system under a \textit{state-feedback controller} $\overline{\pi}:X\rightarrow U$ if and only if (iff) there exists a control sequence $\textbf{u}\in U^\omega$ and a disturbance sequence $\textbf{w} \in W^\omega$ such that for all $t \in \li 0, \infty \ri$:  i) $x_{t+1} = f(x_t, u_t, w_t)$, and ii) $u_t = \overline{\pi}(x_t)$. 
With a slight abuse of terminology, we will also call the finite prefix $\textbf{x}_T$ a trajectory for any natural number $T$. 
Since the state-feedback controller $\overline{\pi}$ makes its decision based on the exact state $x_t$, it is said to have \textit{perfect information}. Also note that, due to the disturbance $w$, the closed-loop system's trajectory is not uniquely determined by the initial condition and the controller.

The following two safety problems  under perfect information have been studied in the literature: 
\begin{itemize}
\item \textbf{Falsification}: given a set $X_{\rm safe}$ of safe states, a set $X_{\rm init}$ of initial states and a state-feedback controller $\overline{\pi}: X\rightarrow U$, find a trajectory $\textbf{x}_T$ of the closed-loop system that starts at some initial state $x_0 \in X_{\rm init}$ 
 such that $x_T \notin X_{\rm safe}$ (proving that such $\textbf{x}_T$ does not exist is called safety \textbf{verification}). 
\item \textbf{Synthesis}: given a set $X_{\rm safe}$ of safe states, find a set of winning initial states and a state-feedback controller $\overline{\pi}: X\rightarrow U$ such that all the closed-loop trajectories $\textbf{x}$ under controller $\overline{\pi}$ starting from an arbitrary winning state will stay in the safe set for all time, i.e. $\textbf{x}\in X_{\rm safe}^\omega$. 
\end{itemize}

\subsection{Game Theoretical Interpretation of Verification/Falsification and Synthesis}
The above problems can be analyzed and solved with a game theoretical interpretation. 
The game is between the controller and the environment.
The controller's goal is to keep the states within the safe set $X_{\rm safe}$, while the environment aims to force the state out of $X_{\rm safe}$. 
At   time $t$, the controller first picks a control input $u_t$ based on the current state $x_t$, and then the environment picks a disturbance input $w_t$. At time $t+1$, the system will evolve to the new state $x_{t+1} = f(x_t, u_t, w_t)$, from where the game will proceed to the next round. 


\textbf{Verification/falsification: one-player game}. 
The verification (or falsification) problem is a one-player game (because the control policy $\overline{\pi}$ is given) where we play the environment's role. 
This one-player game can be solved by performing the following iteration until $X_k^{\overline{\pi}}  \cap X_{\rm init} \neq \emptyset$, 
\begin{align}
    X_0^{\overline{\pi}} & = X_{\rm unsafe} : = X\setminus X_{\rm safe},\label{eq:iter11} \\
    X_{k+1}^{\overline{\pi}} & = \textbf{Pre}^{\overline{\pi}}(X_k^{\overline{\pi}}) := \big\{x \in X\, \big\vert \,\exists w \in W: f\big(x, {\overline{\pi}}(x), w\big) \in X_{k}^{\overline{\pi}} \big\}. \label{eq:iter12}
\end{align}
Once the sets $\{X_k^{\overline{\pi}}\}_{k=1}^T$ are computed, the falsifying run $\textbf{x}_T$ can be easily derived: 
\begin{align}
    x_0 & \in X_T^{\overline{\pi}} \cap X_{\rm init} \label{eq:fwd_e1}\\
    x_{t+1} & \in \big\{f\big(x_t, {\overline{\pi}}(x_t), w\big) \mid w \in W \big\} \cap X_{T-t - 1}^{\overline{\pi}} \text{ for } t\in \li 0, T-1\ri. \label{eq:fwd_e2}
\end{align}

\begin{rmk}
\normalfont
The above procedure can be split into two phases: the \textit{backward expansion phase}, during which we compute sets $\{X_k^{\overline{\pi}}\}_{k=0}^\infty$ by Eq. \eqref{eq:iter11}, \eqref{eq:iter12}, and the \textit{forward expansion phase} where the falsifying trajectory $\textbf{x}_T$ is computed via Eq. \eqref{eq:fwd_e1}, \eqref{eq:fwd_e2}. The computational effort is mainly spent on the backward expansion phase that involves manipulating sets, 
whereas the forward expansion phase is much cheaper because one only needs to find one state per step. 
In the backward expansion phase, the biggest challenge is to compute (or even to under approximate) the predecessors $\textbf{Pre}^{\overline{\pi}}$ of a set of states because the closed-loop dynamics is potentially complicated.  
Even when the open-loop dynamics $f$ is simple, this challenge may arise due of the complexity of the controller $\overline{\pi}$. 
\hfill\(\Diamond\)
\end{rmk}

\textbf{Falsification by synthesis: two-player game}. 
In the synthesis problem, we play the controller's role and refer to the game that we solve as the \textit{safety game}. 
The safety game is a two-player game because the environment's policy is not given a priori. 
If we play the role of the environment assuming that the controller $\overline{\pi}$ is unknown, the resulting two-player game is called the \textit{dual game} \cite{chou2018using} and can be used for falsification. 
We briefly describe the solutions to the safety game and the dual game and their connection below.

In the perfect information setting, there exists a maximal winning set of the safety game, i.e., the maximal robust controlled invariant set $R \subseteq X_{\rm safe}$.  
This maximal winning set can be achieved by a state-feedback controller $\overline{\pi}_{\rm best}$ such that $f\big(x, \overline{\pi}_{\rm best}(x), w\big)\subseteq R$ for all $x \in R$ and $w \in W$. 
The set $R$ can be computed at the limit by iteratively removing set $D_k$ from $X_{\rm safe}$, where $D_k$ is the set of states  that can be forced into the unsafe set $X_{\rm unsafe}: = X\setminus X_{\rm safe}$ in $k$ steps by the uncertainties. Formally, 
\begin{align}
    D_0 & = X_{\rm unsafe},  \label{eq:bkwd_e1}\\
    D_{k+1} & = \textbf{EPre}(D_k) : = D_k \cup \big\{x \in X \, \big\vert \, \forall u \in U: \exists w\in W: f(x, u, w)\cap D_{k} \neq \emptyset\big\}. \label{eq:bkwd_e2}
\end{align}
where $\textbf{EPre}(D_k)$ contains the \textit{environment-controllable predecessor} of set $D_k$.  
It is shown in \cite{bertsekas1972infinite} that under mild compactness conditions, 
\begin{align}
R= X\setminus D \text{ \ \ where \ \ }  D :=  \cup_{k=1}^{\infty} D_k. \label{eq:assump}
\end{align}
The set $D$ is called the \textit{dual game winning set under perfect information}, from where falsification is guaranteed regardless of the controller. Throughout this paper, we assume that Eq. \eqref{eq:assump} holds. 
Once we obtain sets $\{D_{k}\}_{k=0}^{\infty}$ by solving the dual game, an environmental strategy can be derived to pick the disturbance $w_t$, in a way to falsify a specific controller $\overline{\pi}$. Let $x_0 \in D =  \cup_{k=1}^{\infty} D_k$, define 
\begin{align}
x_0 & \in D_T, \text{ for some }T \label{eq:d_fwd_e1} \\
w_t & \text{ is s.t. } x_{t+1} = f\big(x_t, \overline{\pi}(x_t), w_t\big) \in D_{T-t - 1} \text{ for } t \in \li 0, T-1\ri \label{eq:d_fwd_e2}.
\end{align} 
It can be easily shown inductively that $x_t$ exists for all $t \in \li 1, T\ri$ and $x_{T} \in D_0 = X_{\rm unsafe}$. Hence $\textbf{x}_T$ is a trajectory falsifying the controller $\overline{\pi}$. 
Note that we need the above dual strategy for falsification because even if initiated in $D$, a safety violation is not guaranteed unless the environment picks the uncertainties adversarially against the given controller $\overline{\pi}$.

\begin{rmk}\label{rmk:generic}
\normalfont
The dual-game-based falsification procedure also consists of a backward expansion phase and a forward expansion phase. 
Notably, the complexity of the backward expansion only depends on
 the open-loop system's transition map $f$ and is independent of the controller $\overline{\pi}$. 
Via such falsification approach we only get \textit{generic adversarial scenarios}, in which case no controller can ensure safety 
starting from the falsifying initial condition $x_0$. 
\hfill\(\Diamond\)
\end{rmk}

\section{Adversarial Scenarios under Imperfect Information}\label{sec:prob}

In this section, we describe the problem of finding adversarial scenarios under imperfect information. 
We first introduce systems with imperfect information, then use an example to motivate finding ``controller-specific adversarial scenarios", and finally give the formal problem statement.

\subsection{Systems with Imperfect Information}
In this paper, we consider the safety control problem for systems with imperfect state information in the following form: 
\begin{align} 
\Sigma \ : \  
x_{t+1} & = f(x_t, u_t, w_t),  \ \ y_t  = g(x_t, v_t),  \label{eq:syst} 
\end{align}
where the state $x$, the control $u$, the disturbance $w$ and the transition map $f$ are defined exactly the same as the perfect information case,  $y\in Y$ is the output (observation), $v \in V$ is the measurement noise and $g: X \times V \rightarrow Y$ is the measurement map. 
We denote $x\sim y$ if there exists $v \in V$ such that $y = g(x,v)$. 
The control input $u$ can be determined by an \liren{\textit{output-feedback control policy (controller)}} $\pi: (Y\times U)^\ast\times Y \rightarrow U$. We call a controller $\pi$ \textit{static} (aka, memoryless) if its decision at a time instant $t$ only depends on the latest observation $y_t$.

Given a controller $\pi$ and an initial state $x$, the closed-loop system's trajectory is not unique due to the system's  uncertainties. Hence falsifications, if possible, may not be achieved unless the environment adversarially picks the disturbance (i.e., process noise) $w_t$ and the sensor noise $v_t$. 
We introduce the following notion of falsifiability to capture the existence of an adversarial scenario. 

\begin{defn}\label{defn:fal}
\normalfont 
Let $\Sigma$ be a system described by Eq. \eqref{eq:syst} and  $\pi: (Y\times U)^\ast\times Y \rightarrow U$ be a controller. Given a state $x$, the closed-loop system starting from state $x$, denoted by $\Sigma_{\pi}(x)$, is called \textit{falsifiable} w.r.t. a safe set $X_{\rm safe}$ if there exists $T \in \mathbb{Z}$, $\textbf{x}_T\in X^{T+1}$, $\textbf{v}_{T-1} \in V^{T}$, $\textbf{y}_{T-1}\in Y^{T}$, $\textbf{u}_{T-1} \in U^{T}$, $\textbf{w}_{T-1} \in W^{T}$, such that 
\begin{itemize}
\item[1)] $x_0 = x$,  $x_T \in X \setminus X_{\rm safe}$, 
\item[2)] $\forall t\in \li 0, T-1\ri :  x_{t+1}= f(x_t, u_t, w_t), y_t = g(x_t, v_t)$, 
$u_t = \pi(\textbf{y}_t, \textbf{u}_{t-1})$, where $\textbf{u}_{-1} $ is defined to be the empty sequence $\varepsilon$. 
\end{itemize}
Since the sequences $\textbf{x}_T$, $\textbf{y}_{T-1}$ and $\textbf{u}_{T-1}$ can be determined from $x_0 = x$, $\textbf{v}_{T-1}$ and $\textbf{w}_{T-1}$ given $f$, $g$ and $\pi$, we will say that $\Sigma_\pi(x)$ is falsifiable under $(x, \textbf{w}_{T-1}, \textbf{v}_{T-1})$. 
The \textit{safety game winning set of a controller} $\pi$, denoted by $Win^\pi$, is defined to be $\{x \in X\mid \Sigma_\pi(x) \text{ is not falsifiable}\}$. 
\end{defn}

\subsection{A Motivating Example}

We now present a toy example to motivate the problem of finding ``controller-specific" adversarial scenarios under imperfect information. Consider the system in Figure. \ref{fig:eg_motivate}, 
where the circles represent states $x \in \li 1, 6\ri$, the blue and red arrows represent the transitions under two control actions $u \in \{blue, red\}$, and the numbers in quotation marks near a state $x$ represent the possible observations $y$ at state $x$.   The safe set $X_{\rm safe} = \li 1, 5\ri$. It can be verified that the maximal robust controlled invariant set $R = \{1,2,3, 4\}$ and the dual winning set $D = \{5,6\}$ with perfect information. The best state-feedback controller $\overline{\pi}_{\rm best}$ is such that $\overline{\pi}_{\rm best}(3) = blue$ and $\overline{\pi}_{\rm best}(4) = red$.

In the imperfect information case, let $\pi_1$ and $\pi_2$ be two output-feedback controllers. 
 Define $\pi_1$ to be such that $\pi_1(\textbf{y}_t, \textbf{u}_{t-1}) = blue$ if $y_0 \in \{ 1, 3, 4\} $, and $\pi_1(\textbf{y}_t, \textbf{u}_{t-1}) = red$ if $y_0 = 2$. 
 Define $\pi_2$ to be such that $\pi_2(\textbf{y}_t, \textbf{u}_{t-1}) = blue$ if $y_0 = 1$, and $\pi_2(\textbf{y}_t, \textbf{u}_{t-1}) = red$ if $y_0 \in \{2,3,4\}$. 
Clearly, $Win^{\pi_1} = \{1,2,3\}$ and $Win^{\pi_2} = \{1,2,4\}$, and these two winning sets are not comparable. 
More importantly, no other controller has a winning set larger than  $Win^{\pi_1}$ or  $Win^{\pi_2}$. 
\begin{figure}[h]
\centering
    \includegraphics[width=0.9\textwidth]{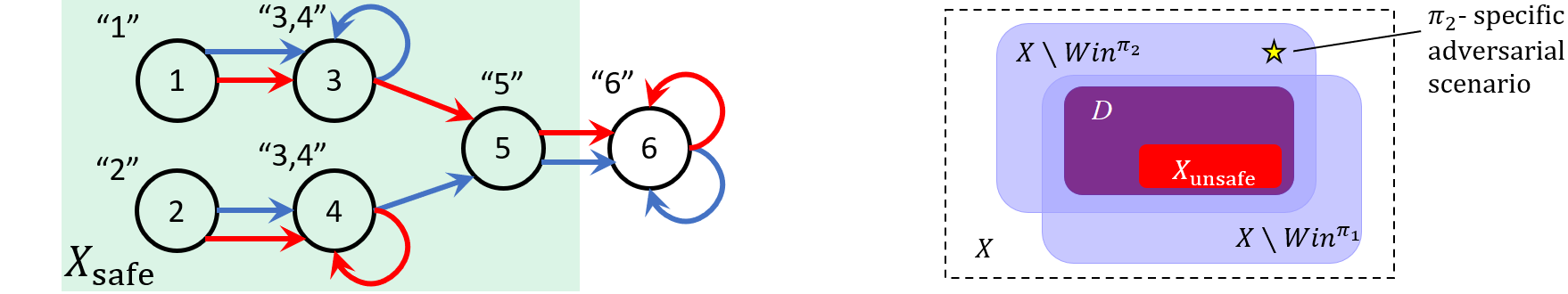}
      \caption{Left: motivating example. Right: illustration of controller-specific falsifications.}
    \label{fig:eg_motivate}
\end{figure}

By the above argument, any output-feedback controller $\pi$  is guaranteed to be falsified either starting from state $3$ or from state $4$. Note, however, that falsification at states $\{1, 2\}$ can be avoided by some controller (e.g., $\pi_1$ and $\pi_2$) although visiting state $3$ and state $4$ is inevitable from state $1$ and state $2$, respectively. 
This is because the two controllers have memory and can distinguish state $3$ from state $4$ if the system starts from either state $1$ or state $2$ and if the initial state is remembered. This is not the case, however, if we restrict the controllers to be static.    
This suggests that a backward expansion algorithm will not work unless the controllers to falsify are static.

The situation is illustrated by Figure \ref{fig:eg_motivate} (right). 
An adversarial scenario for controller $\pi_2$ can be found from the star, but can be avoided by controller $\pi_1$. Moreover, no controller is better than $\pi_2$ ($\pi_1$ is no better because its winning set is incomparable with that of $\pi_2$). 
We will refer to such falsifications as \textit{controller-specific adversarial scenarios}, 
to distinguish them from the generic adversarial scenarios obtained by solving the dual game under full information (see Remark \ref{rmk:generic}). 
\hfill\(\dagger\)

\subsection{Problem Statement}
Motivated by the above discussions, we  will focus on falsifying a closed-loop system running an output-feedback controller. 
Our goal is to find a controller-specific adversarial scenario under imperfect information. 
The controllers under consideration in this work are static controllers. Such static controllers are common in feedback control (e.g., MPC, neural network controller). The formal problem statement  is given below.

\begin{prob}\label{prob:p1}
\normalfont
Consider a white-box system in Eq. \eqref{eq:syst}, and a black-box static output-feedback controller $\pi: Y \rightarrow U$ that can be queried at any given point $y\in Y$, (i.e., $u = \pi(y)$ can be obtained by querying). Given a set $X_{\rm init}$ of initial states, and a set $X_{\rm safe}$ of safe states, find an adversarial test case $(x,\textbf{w}_{T-1}, \textbf{v}_{T-1})$ such that 
$x \in X_{\rm init} \setminus D$, and $\Sigma_\pi(x)$ is falsifiable under $(x,\textbf{w}_{T-1}, \textbf{v}_{T-1})$.  
\end{prob}

In the above problem formulation, requiring $x \in X_{\rm init} \setminus D$ allows the information imperfection (i.e., sensor noise) to play an important role in the falsification (if $x \in D$, the adversarial disturbance by itself can lead to safety violation).  
The initial set $X_{\rm init}$ is usually far enough from the unsafe set so that trivial falsifying trajectories that start too ``close'' to the unsafe set do not exist.

\section{Synthesis-guided Adversarial Scenario Generation}\label{sec:sol}

We first discuss the challenge of solving Problem \ref{prob:p1}. 
Now that our goal is to find controller-specific falsifying scenarios, we must use the knowledge of the given controller $\pi$.  
The dual-game-based falsification approach does not apply directly because it only searches for generic adversarial scenarios, as discussed in Remark \ref{rmk:generic}. 
Theoretically, similar to Eq. \eqref{eq:iter11}, \eqref{eq:iter12}, one can search for controller-specific scenarios by solving a one-player game: 
\begin{align}
    X_0^\pi & = D,\label{eq:iter111} \\
    X_{k+1}^\pi & = \textbf{Pre}^{\pi}(X_k^{\pi}): = \big\{x \in X\, \big\vert \,\exists y \sim x, w \in W: f\big(x, \pi(y), w\big)\in X_{k}^\pi \big\}. \label{eq:iter112}
\end{align}
We stop iterating when $(X_N^\pi \setminus D) \cap X_{\rm init} \neq \emptyset$, and a falsifying trajectory can be extracted by  forward expansion. 
However, computing $\textbf{Pre}^{\pi}(\cdot)$ is possible only when the explicit expression of the given controller $\pi$ is fully known and also simple enough. 
In practice, it is more common that the given static output-feedback controller is complicated and even unknown. For example,  
this given controller can be a rule-based controller (i.e., look-up table) obtained by calibration or learning methods, an MPC controller, or even a black-box controller whose expression is unavailable to the test engineers. The backward reachable sets of the closed-loop systems under such controllers are difficult to compute in general. Instead, we are only allowed to simulate or query the given static controller at certain observation points, and this provides us with limited knowledge of the controller's behavior.  
The key challenge is to generate careful queries at critical points and avoid sampling the entire space exhaustively.

The rest of this section is devoted to tackling the above challenge. 
The key idea is to use synthesis, which does not require any knowledge of the given controller $\pi$, to guide where to query  $\pi$.  
 In particular, the controller is only queried at observation points carefully selected by solving a local dual game, so that exhaustive sampling of the output space is avoided. 
Based on the controller's decision at the selected points, we can under-approximate $\textbf{Pre}^{\pi}(\cdot)$. This leads to an  inner approximation of the backward reachable set $X_k^\pi$, which would be sufficient for the purpose of falsification because we only aim at finding one falsifying trajectory.

In what follows, we will introduce different ingredients in this procedure and conclude the section with a pseudocode that integrate all these ingredients.

\subsection{Under-approximate $\textbf{Pre}^{\pi}(\cdot)$ with Local Dual Game and Query}
First, the inner approximation of $X_k^\pi$  can be constructed by an iterative process that alternatingly solves a local dual game under imperfect information \eqref{eq:iter21} and queries the given controller \eqref{eq:iter23}.
\begin{align}
    \underline{X}_0^\pi & \subseteq D, \label{eq:iter21}\\
    \hspace{-2mm}\underline{Y}_{k+1} & = \textbf{EPre}_y(\underline{X}_k^{\pi} \mid U) := \{y \in Y\,\vert\,\forall u \in U: \exists x\sim y, w \in W: f(x, u, w) \in \underline{X}_k^\pi \}, \label{eq:iter22} \\
    \underline{y}_{k+1} & \in \underline{Y}_{k+1}, \label{eq:iter23}\\
    \underline{X}_{k+1}^\pi & =  \textbf{Pre}_{\underline{y}_{k+1}}^\pi(\underline{X}_k^{\pi}) := \big\{x \in X  \big\vert x\sim \underline{y}_{k+1}, \exists w \in W: f\big(x, \pi(\underline{y}_{k+1}), w\big)\in \underline{X}_k^\pi \big\}. \label{eq:iter24}
\end{align}
In Eq. \eqref{eq:iter21}-\eqref{eq:iter24}, we start from the dual game winning set $D$ with perfect information and expand backwards. Set $\underline{Y}_{k+1}$ is computed as the solution of a local dual game with imperfect information,  consisting of the output points $y$ for which no matter what control input $u$ is picked at $y$, a violation will occur starting from some $x\sim y$ ($x$ depends on $u$). 
The operation $\textbf{EPre}_y(\cdot)$ is similar to the environment-controller predecessor but in the observation space (indicated by subscript $y$). 
In the motivating example, $\underline{X}_0 = \{5, 6\}$ and $\underline{Y}_{1} = \{3, 4\}$. 
The significance of step \eqref{eq:iter22} is that it does not require any knowledge of $\pi$ to compute $\textbf{EPre}_y(\cdot)$, and hence the computational complexity is independent of the complexity of $\pi$. We then pick an output point $\underline{y}_{k+1}$ from $\underline{Y}_{k+1}$ and simulate the closed-loop system under $\pi(\underline{y}_{k+1})$. This allows us to compute $\underline{X}_{k+1}^\pi$, the set of states from which visiting $\underline{X}_k^\pi$ can be enforced by the environment. Note that, to compute the set $\underline{X}_{k+1}^\pi$, we only need to know the decision of $\pi$ at one point $\underline{y}_{k+1}$. This makes sure that, again, the complexity of the computation does not depend on the complexity of  $\pi$. 

As promised at the beginning of this section, for the alternating iteration process to be correct, the set  $\underline{X}_k^\pi$ should be an inner approximation of $X_k^\pi$. This is stated and proved below.

\begin{thm}\label{thm:main}
\normalfont
Let $\underline{X}_k^\pi$ and $X_k^\pi$ be the sets computed by iterations \eqref{eq:iter21}-\eqref{eq:iter24}, 
and iterations \eqref{eq:iter111}, \eqref{eq:iter112}  respectively, then 
$\underline{X}_k^\pi \subseteq X_k^\pi$.
\end{thm}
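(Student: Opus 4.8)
The plan is a straightforward induction on $k$, where the only real content is recognizing that the single-observation predecessor $\textbf{Pre}_{\underline{y}_{k+1}}^\pi$ is merely a restriction of the full existential-over-outputs predecessor $\textbf{Pre}^\pi$. First I would establish the base case: by \eqref{eq:iter21} the initialization satisfies $\underline{X}_0^\pi \subseteq D$, while \eqref{eq:iter111} gives $X_0^\pi = D$, so $\underline{X}_0^\pi \subseteq X_0^\pi$ holds immediately.

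Next, assuming the inductive hypothesis $\underline{X}_k^\pi \subseteq X_k^\pi$, I would prove $\underline{X}_{k+1}^\pi \subseteq X_{k+1}^\pi$ by an element chase. Take an arbitrary $x \in \underline{X}_{k+1}^\pi = \textbf{Pre}_{\underline{y}_{k+1}}^\pi(\underline{X}_k^\pi)$. Unpacking the definition in \eqref{eq:iter24}, this means $x \sim \underline{y}_{k+1}$ and there exists $w \in W$ with $f(x, \pi(\underline{y}_{k+1}), w) \in \underline{X}_k^\pi$. I would then use $\underline{y}_{k+1}$ itself as the witness output demanded by the full predecessor \eqref{eq:iter112}: the compatibility $x \sim \underline{y}_{k+1}$ supplies an admissible observation $y \sim x$, and by the inductive hypothesis $f(x, \pi(\underline{y}_{k+1}), w) \in \underline{X}_k^\pi \subseteq X_k^\pi$, so the same $w$ certifies $x \in \textbf{Pre}^\pi(X_k^\pi) = X_{k+1}^\pi$. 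This closes the induction.

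The step I would flag as the conceptual crux---though it is not computationally hard---is making explicit why committing to a single queried output is sound. The full iteration \eqref{eq:iter112} existentially quantifies over all $y \sim x$, effectively letting the adversary pick, per state, the most favorable compatible observation; the inner iteration instead fixes one $\underline{y}_{k+1}$ for the entire set. Because choosing one admissible output is just a special case of that existential choice, the inner predecessor can never produce a state outside $\textbf{Pre}^\pi(X_k^\pi)$, which is precisely the mechanism driving the inclusion.

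Finally, I would remark that the inclusion does not in fact rely on the membership $\underline{y}_{k+1} \in \underline{Y}_{k+1}$ coming from \eqref{eq:iter22}--\eqref{eq:iter23}: the role of $\textbf{EPre}_y$ is to guarantee that $\underline{X}_{k+1}^\pi$ is nonempty and that the backward expansion makes genuine progress toward $X_{\rm init}$, whereas the subset relation asserted in Theorem \ref{thm:main} holds for \emph{any} choice of $\underline{y}_{k+1} \in Y$. The inductive hypothesis is the only ingredient that transfers the target membership from $\underline{X}_k^\pi$ up to $X_k^\pi$, and the witness $y = \underline{y}_{k+1}$ is the only ingredient that turns the fixed-output predecessor into the existential one.
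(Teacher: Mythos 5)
Your proof is correct and follows essentially the same route as the paper's: induction on $k$, with the inductive step resting on the observation that the fixed queried output $\underline{y}_{k+1}$ serves as the witness for the existential quantifier over $y \sim x$ in \eqref{eq:iter112}, combined with the inductive hypothesis to lift membership from $\underline{X}_k^\pi$ to $X_k^\pi$. The paper merely packages this same element-chase into set-level language (defining $XY_{k+1}^\pi$, its projection $X_{k+1}^\pi$, and its slice $\widetilde{X}_{k+1}^\pi$ at $y=\underline{y}_{k+1}$), and your closing remark—that the inclusion holds for any choice of $\underline{y}_{k+1}$, with \eqref{eq:iter22}--\eqref{eq:iter23} only ensuring nonemptiness and progress—is an accurate reading of what the argument does and does not use.
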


\begin{proof}
We prove this by induction. 
\begin{enumerate}
    \item[$1^\circ$] Base case: by \eqref{eq:iter111} and \eqref{eq:iter21}, $\underline{X}_0^\pi \subseteq D  = X_0^\pi$. 
    \item[$2^\circ$] 
Assume $\underline{X}_k^\pi \subseteq X_k^\pi$, we prove $\underline{X}_{k+1}^\pi \subseteq X_{k+1}^\pi$. Define \vspace{-1.5mm}
    \begin{align}
        XY_{k+1}^\pi & : = \big\{(x,y) \in X\times Y\mid x\sim y, \exists w \in W: f\big(x,\pi(y), w\big)\in X_k^\pi \big\}, \label{eq:XYpi}\\
        Y_{k+1}^\pi & : = \big\{y\in Y \mid \exists x\sim y, w\in W: f\big(x, \pi(y), w\big)\in  X_k^\pi \big\}, \\
        \widetilde{X}_{k+1}^\pi & : =  \big\{x \in X \, \, \big\vert\, x\sim \underline{y}_{k+1}, \exists w \in W: f\big(x, \pi(\underline{y}_{k+1}), w\big) \in X_k^\pi\big\}. \label{eq:Xpi}
    \end{align}
    By definition, $X_{k+1}^\pi$ is the projection of $XY_{k+1}^\pi$ onto $X$ space, $Y_{k+1}^\pi$ is the projection of $XY_{k+1}^\pi$ onto $Y$ space, and $\widetilde{X}_{k+1}^\pi$ is the slice of $XY_{k+1}^\pi$ at $y = \underline{y}_{k+1}$ (see Fig. \ref{fig:prf}). 
    Clearly, we have 
       $ \underline{y}_{k+1} \in \underline{Y}_{k+1} \subseteq Y_{k+1}^\pi$, 
    and this implies $\widetilde{X}_{k+1}^\pi \subseteq X_{k+1}^\pi$. 
 Note that  we also have $\underline{X}_{k+1}^\pi\subseteq\widetilde{X}_{k+1}^\pi  $ by the induction hypothesis $\underline{X}_k^\pi \subseteq X_k^\pi$. Thus
       $ \underline{X}_{k+1}^\pi \subseteq \widetilde{X}_{k+1}^\pi \subseteq X_{k+1}^\pi$,  
    which completes the induction step and the entire proof. 
\end{enumerate}
\vspace{-4.5mm} 
\end{proof}

Once the sets $\underline{X}_{k}^\pi$ and the observations $\underline{y}_{k+1}$ are computed, an adversarial scenario $(x_0, \textbf{w}_{N-1}, \textbf{v}_{N-1})$ can be derived using a forward expansion procedure:  
\begin{align}
    x_0 & \in \underline{X}_N^\pi \cap X_{\rm init}, \label{eq:a_fwd_e1}\\
    y_t & =\underline{y}_{N-t}, \ \  v_t \text{ is s.t. } g(x_t, v_t) = y_t \text{ for } t\in \li 0, N-1\ri, \\
    w_t & \text{ is s.t. }  f\big(x_t, \pi(y_t), w_t\big) \in \underline{X}_{N-t - 1}^\pi, \ \  x_{t+1} = f\big(x_t, \pi(y_t), w_t\big) \text{ for } t\in \li 0, N-1\ri.  \label{eq:a_fwd_e3}
\end{align}
Clearly, $x_N \in \underline{X}_0^\pi \subseteq D$. The full scenario $(x_0, \textbf{w}_{T-1}, \textbf{v}_{T-1})$ can be completed by Eq. \eqref{eq:d_fwd_e1}, \eqref{eq:d_fwd_e2} so that $x_T \in X_{\rm unsafe}$, with $v_N, \dots, v_T$ being arbitrarily picked.

\begin{figure}[hbt]
\centering
    \includegraphics[width=0.4\textwidth]{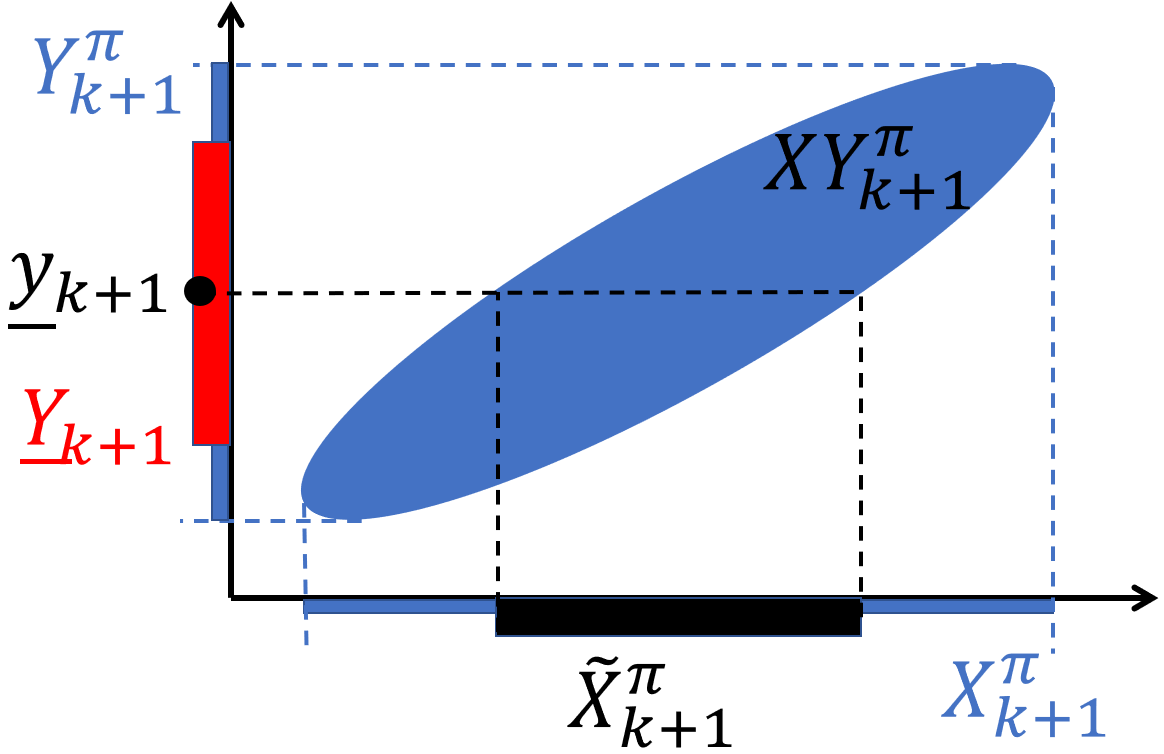}
    \caption{Illustration: proof of Theorem 1.}
    \label{fig:prf}
\end{figure}
Note that the proposed algorithm is not complete in that it does not guarantee to find a falsification scenario even if one exists. 
Otherwise it will be equivalent to verification in the sense that a system is verified whenever no falsifying scenario is found. 
Such a complete procedure, however, is computationally hard (e.g., most verification problems with continuous state-spaces are undecidable \cite{henzinger2000robust}) and is not the point of the falsification problem. Moreover, it is not clear how to do verification with unknown components without resorting to statistical techniques. 

\subsection{Computing $\underline{X}_k^\pi$ and $\underline{y}_k$ for Switched Affine Systems }
As pointed out earlier, the complexity of the set computation in the alternating iteration \eqref{eq:iter21}-\eqref{eq:iter24} 
 is independent of the controller's complexity. Hence it is applicable to systems whose open-loop dynamical model $f$ and the sensor model $g$ have modest complexity. 
For such systems, sets $\underline{X}_k^\pi$, $\underline{Y}_k$ can be described and computed efficiently using e.g., HJB methods \cite{mitchell2005time}, optimization-based approaches \cite{lasserre2015tractable}, interval analysis \cite{li2017invariance}, 
just to mention a few. In what follows,  we show, for switched affine systems with linear constraints, how to do the set computations by leveraging the existing polytope/zonotope computation techniques. Our implementation includes both of these options.
While the polytope-based implementation is faster for 2D and 3D plant models, the zonotope-based implementation scales better with the dimension of the state space.

\begin{prop}\label{prop:swaP}
\normalfont
Let $\Sigma$ be the following switched affine system, 
\begin{align}
x_{t+1} & = A_{s_t} x_t + B_{s_t} u_t + K_{s_t} + E_{s_t}w_t,  \label{eq:syst_swa} \\
y_t & = C x_t + Du_t + Fv_t, \label{eq:sysm_swa}
\end{align}
where $s_t$ is a switching control input from a finite set $S$. Suppose that set $\underline{X}_k^\pi$ and the continuous control input set $U$ are polyhedrons, i.e., $\underline{X}_k^\pi = \{x\in \mathbb{R}^{n_x}\mid Hx \leq h\}$, and $U = \text{cvxh}(\mathcal{V}_u)$ where $\text{cvxh}$ denotes the convex hull and $\mathcal{V}_u \subseteq \mathbb{R}^{n_u}$ is a finite set of vertices. 
Also assume that, at a given state $x$, the (potentially state-dependent) process noise $w \in W(x) := \{w\in \mathbb{R}^{n_w}\mid H_x x + H_w w \leq h_w\}$, and the sensor noise $v \in V(x) := \{v\in \mathbb{R}^{n_v}\mid G_x x + G_v v \leq g_v\}$.  
Then $\underline{y}_{k+1} \in \underline{Y}_{k+1}$ can be found by solving the linear program in  \eqref{eq:YkP}, and the set  $\underline{X}_{k+1}^\pi$ is the polyhedron in \eqref{eq:XkP}. 
\begin{align}
&  \begin{array}{rl}
\text{ find } & \underline{y}_{k+1}  \in \mathbb{R}^{n_y}, \ \ \{x_{s,u}, \ w_{s,u}, \ v_{s,u}\}_{s\in S, u\in \mathcal{V}_u} \\
 \text{ s.t. } & \forall s\in S, u\in \mathcal{V}_u: \\
& H(A_s x_{s,u} + B_s u + K_s + E_s w_{s,u}) \leq h, \ \ C x_{s,u} + Du + Fv_{s,u} = \underline{y}_{k+1}, \\
& H_x x_{s,u} + H_w w_{s,u} \leq h_w, \ \ G_x x_{s,u} + G_v v_{s,u} \leq g_v 
\end{array}, 
 \label{eq:YkP}
\\
& \underline{X}_{k+1}^{\pi}\hspace{-0mm}=\hspace{-0mm}\textbf{Proj}_x\hspace{-0mm}\left(\hspace{-0mm}\left\{\hspace{-0mm}\left.
\begin{array}{l} x  \in \mathbb{R}^{n_x} \\
w \in \mathbb{R}^{n_w}  \\
v \in \mathbb{R}^{n_v}  \end{array}
\hspace{-1mm}\right\vert \hspace{-0mm}
\begin{array}{l}
H(A_{s_{k+1}} x + B_{s_{k+1}} u + K_{s_{k+1}} + E_{s_{k+1}} w) \leq h \\
C x + Du_{k+1} + Fv = \underline{y}_{k+1}\\
H_x x+ H_w w \leq h_w, \ G_x x + G_v v \leq g_v\\
\end{array}
\right\}\right).
 \label{eq:XkP}
\end{align}
In \eqref{eq:XkP}, $(s_{k+1}, u_{k+1}) = \pi(\underline{y}_{k+1})$ and $\textbf{Proj}_x(\mathcal{X}) = \{x \mid \exists z: (x,z)\in \mathcal{X}\}$ is the polytope projection from some high-dimensional polytope $\mathcal{X}$. 
\end{prop}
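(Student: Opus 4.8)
The plan is to prove the two assertions of the proposition separately: that the linear program \eqref{eq:YkP} produces a point of $\underline{Y}_{k+1}$, and that \eqref{eq:XkP} equals $\underline{X}_{k+1}^\pi$. The second assertion is essentially a matter of unfolding definitions, so the substantive work lies in the first, where the universal quantifier over the continuous control set $U$ in the definition \eqref{eq:iter22} of $\textbf{EPre}_y$ must be reduced to the finite vertex set $\mathcal{V}_u$ appearing in \eqref{eq:YkP}.

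For the first assertion I would show that a value $y$ is feasible for \eqref{eq:YkP} if and only if $y\in \underline{Y}_{k+1}$, so that any LP solution lies in $\underline{Y}_{k+1}$. One direction is immediate from the definition: if $y\in \underline{Y}_{k+1}$, then every control input $(s,u)\in S\times U$, and in particular every vertex pair $(s,u)\in S\times \mathcal{V}_u$, admits some $x\sim y$ and $w\in W(x)$ with $f(x,u,w)\in\underline{X}_k^\pi$; choosing for each such pair a noise $v$ realizing $x\sim y$ yields a feasible point of \eqref{eq:YkP}. The converse, that feasibility of $y$ forces $y\in\underline{Y}_{k+1}$, is where convexity enters. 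Given a feasible $y$ together with vertex witnesses $\{x_{s,u},w_{s,u},v_{s,u}\}$, I would take an arbitrary control $(s,u)$ with $u=\sum_i\lambda_i u_i$ a convex combination of the vertices $u_i\in\mathcal{V}_u$, and set $x:=\sum_i\lambda_i x_{s,u_i}$, $w:=\sum_i\lambda_i w_{s,u_i}$, $v:=\sum_i\lambda_i v_{s,u_i}$. Because the dynamics is affine in $(x,u,w)$ for each fixed $s$, the measurement is affine in $(x,u,v)$, and the sets $\underline{X}_k^\pi$, $W(x)$, $V(x)$ are polyhedral and hence convex, each of the four constraint families in \eqref{eq:YkP} is preserved under this combination, so $(x,w,v)$ certifies that $(s,u)$ also admits a witness. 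Since $(s,u)$ was arbitrary, $y\in\underline{Y}_{k+1}$. Note that the switching input $s$ ranges over the finite set $S$ and is handled by direct enumeration in \eqref{eq:YkP}, so no convexity is needed for $s$; the reduction to vertices exploits only the affine dependence of the dynamics on the continuous control $u$.

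For the second assertion, once $\underline{y}_{k+1}$ is fixed and the controller is queried to obtain $(s_{k+1},u_{k+1})=\pi(\underline{y}_{k+1})$, I would simply rewrite the definition \eqref{eq:iter24} of $\textbf{Pre}^\pi_{\underline{y}_{k+1}}(\underline{X}_k^\pi)$. With this control fixed, the condition $x\sim\underline{y}_{k+1}$ becomes $\exists v\in V(x): Cx+Du_{k+1}+Fv=\underline{y}_{k+1}$, and the existence of $w\in W(x)$ with $f(x,\pi(\underline{y}_{k+1}),w)\in\underline{X}_k^\pi$ becomes the remaining linear constraints with $w$ existentially quantified. Hence $\underline{X}_{k+1}^\pi$ is exactly the set of $x$ for which there exist $w,v$ satisfying the polyhedral system, i.e., the projection onto the $x$-coordinates of that polyhedron, which is \eqref{eq:XkP}; since polyhedra are closed under projection, this set is again a polyhedron.

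The main obstacle I anticipate is this converse direction for $\underline{Y}_{k+1}$, and within it the care needed to keep all vertex witnesses pinned to a common output $y$: the shared value of $y$ across the LP is exactly what makes $Cx+Du+Fv=\sum_i\lambda_i(Cx_{s,u_i}+Du_i+Fv_{s,u_i})=\sum_i\lambda_i y=y$ hold for the convex combination, which is the structural reason the finite LP over $\mathcal{V}_u$ can replace a quantifier over all of $U$.
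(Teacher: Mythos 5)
Your proposal is correct. The paper states Proposition~\ref{prop:swaP} without proof (it is treated as a direct specialization of \eqref{eq:iter22}--\eqref{eq:iter24} to the switched affine setting), and your argument supplies exactly the intended reasoning: the reduction of the universal quantifier over $U$ to the finite vertex set $\mathcal{V}_u$ via convex combinations of the per-vertex witnesses $(x_{s,u},w_{s,u},v_{s,u})$ --- which works precisely because the dynamics and measurement are affine for fixed $s$, the constraint sets are polyhedral, and all witnesses are pinned to the common output $\underline{y}_{k+1}$ --- together with the observation that \eqref{eq:XkP} is just the definition of $\textbf{Pre}^{\pi}_{\underline{y}_{k+1}}(\underline{X}_k^{\pi})$ with $w,v$ existentially quantified and then projected out.
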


We implement \eqref{eq:YkP}, \eqref{eq:XkP} with MPT3 toolbox \cite{MPT3}. 
Due to the complexity of the projection operation in \eqref{eq:XkP}, we are restricted to plants with low-dimensional state spaces (i.e., $n_x \leq 4$). 
To handle the cases where $n_x > 4$, we use zonotopes to under-approximate the polytopic set  $\underline{X}_{k+1}^\pi$.

\begin{prop}\label{prop:zono}
\normalfont 
Consider a switched affine system $\Sigma$ in \eqref{eq:syst_swa}, \eqref{eq:sysm_swa}. 
Assume that $C = I$, $A_s$ is invertible\footnote{This assumption holds when the discrete-time model is obtained by sampling a continuous-time linear system. } for all $s\in S$, and that $U = \langle G_u, c_u\rangle$, $W = \langle G_w, c_w\rangle$, $V = \langle G_v, c_v\rangle$ and $\underline{X}_k^\pi = \langle G_x, c_x\rangle$ are zonotopes\footnote{$\langle G,c\rangle$ is the generator-representation of the zonotope $\{G\theta + c \mid \theta \in [-1,1]^N\}$, where $G \in \mathbb{R}^{n\times N}$, $c\in \mathbb{R}^n$.}. 
Then $\underline{y}_{k+1}$ is a feasible solution of the linear program in \eqref{eq:YkZ}, and $\underline{X}_{k+1}^\pi$ is the intersection of two zonotopes in \eqref{eq:XkZ}.   
\begin{align}
&  \begin{array}{rl}
\text{ find } & \underline{y}_{k+1}  \in \mathbb{R}^{n_y}, \ \ \{x_{s,u}, \ \xi_{s,u}, \ w_{s,u}, \ \omega_{s,u}, \ v_{s,u},  \nu_{s,u}, \ \}_{s\in S, u\in \mathcal{V}_u} \\
 \text{ s.t. } & \forall s\in S, u\in \mathcal{V}_u: \\
& A_s x_{s,u} + B_s u + K_s + E_s w_{s,u} = G_x \xi_{s,u} + c_x, \ \ C x_{s,u} + Du + Fv_{s,u} = \underline{y}_{k+1}, \\
& w_{s,u} = G_w \omega_{s,u} + c_w, \ \ v_{s,u} = G_v \nu_{s,u} + c_v
\end{array}, 
 \label{eq:YkZ}
\\
& \underline{X}_{k+1}^\pi = A_{s_{k+1}}^{-1}(\underline{X}_k^\pi \oplus -E_{s_{k+1}} W - B_{s_{k+1}} u_{k+1} - K_{s_{k+1}}) \cap (\underline{y}_{k+1} \oplus -FV - Du_{k+1}). 
\label{eq:XkZ}
\end{align}
In \eqref{eq:XkZ},  $(s_{k+1}, u_{k+1}) = \pi(\underline{y}_{k+1})$, $\langle G_1, c_1\rangle \oplus \langle G_1, c_1\rangle := \langle [G_1, G_2], c_1+c_2\rangle$ and $H\langle G, c\rangle := \langle HG, Hc\rangle$. 
\end{prop}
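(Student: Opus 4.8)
The plan is to treat the two claims separately, since the first concerns the observation point $\underline{y}_{k+1}$ produced by the local dual game \eqref{eq:iter22} and the second concerns the state set \eqref{eq:iter24}. Throughout I would lean on two elementary facts about zonotopes: membership $p \in \langle G, c\rangle$ is equivalent to the existence of a coordinate vector $\theta \in [-1,1]^{N}$ with $p = G\theta + c$, and the class of zonotopes is closed under linear images and Minkowski sums via the identities $H\langle G, c\rangle = \langle HG, Hc\rangle$ and $\langle G_1, c_1\rangle \oplus \langle G_2, c_2\rangle = \langle [G_1, G_2], c_1 + c_2\rangle$ recorded in the statement.

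For the first claim, recall that by \eqref{eq:iter22}--\eqref{eq:iter23} the point $\underline{y}_{k+1}$ lies in $\underline{Y}_{k+1} = \textbf{EPre}_y(\underline{X}_k^\pi \mid U)$, which for the switched plant means that for every admissible control, i.e. every pair $(s,u) \in S\times U$ the controller could output, there exist $x \sim \underline{y}_{k+1}$ and $w \in W$ with $A_s x + B_s u + K_s + E_s w \in \underline{X}_k^\pi$. I would first specialize this universal condition to the finite set $S\times\mathcal{V}_u$, take the associated witnesses $x_{s,u}, w_{s,u}$ together with a noise $v_{s,u}\in V$ realizing $x_{s,u}\sim\underline{y}_{k+1}$, and then encode the three memberships $A_s x_{s,u}+B_s u+K_s+E_s w_{s,u}\in\underline{X}_k^\pi$, $w_{s,u}\in W$, $v_{s,u}\in V$ through their generator coordinates $\xi_{s,u},\omega_{s,u},\nu_{s,u}$ (each constrained to the appropriate box $[-1,1]^{N}$). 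This converts every membership into exactly the affine equalities of \eqref{eq:YkZ}, with the coordinate boxes left implicit, so $\underline{y}_{k+1}$ is feasible. The substantive content — and what actually justifies using the LP to \emph{select} $\underline{y}_{k+1}$ — is the converse: feasibility over the vertices $\mathcal{V}_u$ implies the condition for all $u\in U$. For this I would form convex combinations $x_s=\sum_j\lambda_j x_{s,j}$, $w_s=\sum_j\lambda_j w_{s,j}$, $v_s=\sum_j\lambda_j v_{s,j}$ for $u=\sum_j\lambda_j u^{(j)}$, and use affinity of $f$ in $(x,u,w)$ and of $g$ in $(x,u,v)$ together with convexity of $\underline{X}_k^\pi$, $W$, $V$ to transfer all three memberships to the combination; the constant $K_s$ survives because $\sum_j\lambda_j=1$.

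For the second claim, I would rewrite the defining set \eqref{eq:iter24}, with $(s_{k+1},u_{k+1})=\pi(\underline{y}_{k+1})$, as the conjunction of its two constraints. The observation constraint $x\sim\underline{y}_{k+1}$ reads $C x + D u_{k+1} + F v = \underline{y}_{k+1}$ for some $v\in V$; using $C=I$ I solve for $x$ to obtain $x\in\underline{y}_{k+1}\oplus(-FV)-Du_{k+1}$, the second zonotope of \eqref{eq:XkZ}. The backward-reachability constraint $\exists w\in W: A_{s_{k+1}}x+B_{s_{k+1}}u_{k+1}+K_{s_{k+1}}+E_{s_{k+1}}w\in\underline{X}_k^\pi$ I would rephrase as $A_{s_{k+1}}x\in\underline{X}_k^\pi\oplus(-E_{s_{k+1}}W)-B_{s_{k+1}}u_{k+1}-K_{s_{k+1}}$, the union over $w\in W$ being precisely the Minkowski sum, and then apply $A_{s_{k+1}}^{-1}$ — available by the invertibility assumption — to recover the first zonotope of \eqref{eq:XkZ}. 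Intersecting the two sets gives \eqref{eq:XkZ}. Each factor is a genuine zonotope since it is assembled from $\underline{X}_k^\pi$, $W$, $V$ by linear maps, translations, and Minkowski sums; the intersection itself need not be a zonotope, which is exactly why the proposition phrases $\underline{X}_{k+1}^\pi$ as an intersection of two zonotopes rather than a single one.

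I expect the main obstacle to be the vertex-reduction step of the first claim, namely the equivalence between quantifying over all $u\in U$ and over the finitely many vertices $\mathcal{V}_u$. The forward direction is a trivial restriction to a subset of the quantifier, but the converse — the direction the algorithm genuinely relies on — needs the convex-combination argument above, including a check that a \emph{single} $v_s$ can simultaneously realize the observation equality for the combined control $u$; this goes through only because $g$ is affine in $(x,u,v)$ and $V$ is convex. Everything after the closure identities are invoked is routine set algebra on zonotopes.
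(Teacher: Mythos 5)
The paper states Proposition~2 without proof, treating it as a routine translation of the set definitions \eqref{eq:iter22}--\eqref{eq:iter24} into zonotope operations, so there is no authorial argument to compare against; judged on its own, your proof is correct and is exactly the verification the authors left implicit. Your second-claim argument (solving the observation equation for $x$ via $C=I$, rewriting the existential over $w$ as a Minkowski sum, and pulling back through $A_{s_{k+1}}^{-1}$) matches \eqref{eq:XkZ} precisely, and you rightly note that the intersection itself need not be a zonotope, which is why the paper follows the proposition with the containment LP \eqref{eq:zc}. You also correctly flag the two points where the statement is silently incomplete and supply what is needed: the box constraints $\xi_{s,u},\omega_{s,u},\nu_{s,u}\in[-1,1]^{N}$ are not written in \eqref{eq:YkZ} but must be imposed for the equalities to encode zonotope membership, and the quantifier reduction from all $u\in U$ to the vertices $\mathcal{V}_u$ genuinely requires the convex-combination argument -- affinity of $f$ and $g$ together with convexity of $\underline{X}_k^\pi$, $W$, $V$, with the constant terms surviving because the weights sum to one. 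The only cosmetic caveat is that a zonotope is given here by generators rather than by its vertex set, so when you write $U=\text{cvxh}(\mathcal{V}_u)$ you should say explicitly that $\mathcal{V}_u$ denotes the (finitely many) vertices of the zonotope $\langle G_u,c_u\rangle$; this changes nothing in the argument.
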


The main challenge in computing the set $\underline{X}_{k+1}^\pi$ using Proposition \ref{prop:zono} is the intersection operation in \eqref{eq:XkZ}. 
In fact, the intersection of two zonotopes is not necessarily a zonotope. However, we can find a zonotopic under-approximation of the intersection by solving a linear program. That is, the intersection of two zonotopes $\langle G_1, c_1\rangle$, $\langle G_2, c_2\rangle$ can be under-estimated by $\langle \Lambda [G_1, G_2], c\rangle$, where the diagonal matrix $\Lambda = \textbf{diag}(\{\lambda_i\}_i)$ is the solution of the following optimization problem: 
\begin{align}
\begin{array}{rl} 
\max_{\lambda} & \sum_i \lambda_i \\
\text{s.t. } & \langle \Lambda [G_1, G_2], c\rangle \subseteq \langle G_1, c_1\rangle \\
& \langle \Lambda [G_1, G_2], c\rangle \subseteq \langle G_2, c_2\rangle \\
& \lambda_i \geq 0, \forall i 
\end{array}. 
\label{eq:zc}
\end{align}
The zonotope containment constraints in \eqref{eq:zc} can be reformulated into a set of linear constraints using the technique  developed in \cite{sadraddini2019linear}. Therefore \eqref{eq:zc} is equivalent to a linear program. 
The obtained zonotope $\langle \Lambda [G_1, G_2], c\rangle$ is an inner approximation of $\underline{X}_{k+1}^\pi$. Moreover, for low dimensional systems (i.e., $n_x = 2$ or $3$), solving the above optimization problem is slower than computing the polytopic $\underline{X}_{k+1}^\pi$ as in Proposition \ref{prop:swaP}. However, the zonotope implementation scales better with $n_x$ and can easily go beyond what can be solve via the off-the-shelf polytope computation tools. For example, the zonotope-based implementation can handle systems with $n_x = 10$.

\subsection{Improving Completeness}\label{sec:complete}

We provide a heuristic and a refinement technique to improve the completeness of the proposed alternating backward expansion \eqref{eq:iter21}-\eqref{eq:iter24}.

\textbf{Selection of the queried point $\underline{y}_{k+1}$}. 
The completeness of our approach can be improved by a better selection of the observation point $\underline{y}_{k+1}$ in Eq. \eqref{eq:iter22}. 
One natural heuristic is to pick $\underline{y}_{k+1} \in \underline{Y}_{k+1}$ that minimizes the distance from the set $\{x\mid x\sim\underline{y}_{k+1}\}$ to the initial set $X_{\rm init}$. However, this usually leads to set $\underline{X}_{k+1}^\pi$ having zero measure and $\underline{Y}_{k+2}= \emptyset$.  
If such situation occurs before $\underline{X}_k^\pi$ reaches  $X_{\rm init}$, we may end up with an unsatisfactory falsifying trajectory that starts too close to the unsafe set, 
and the alternating backward expansion is said incomplete if falsifying trajectories starting from 
 $X_{\rm init}$ exist but the algorithm does not find one due to a bad selection of $\underline{y}_{k+1}$. 
Hence there is a trade-off between picking a $\underline{y}_{k+1}$ that leads to a state closer to the initial set $X_{\rm init}$, and picking a $\underline{y}_{k+1}$ that leads to a larger set $\underline{X}_{k+1}^\pi$.  Here we use the following  simple heuristic 
\vspace{-0mm}
 \begin{align}
\underline{y}_{k+1} = \beta y_{\rm close} + (1 - \beta) y_{\rm center}, 
\label{eq:beta}
\end{align}
where $y_{\rm close}$ minimizes the Euclidean distance between set $\{x\mid x\sim y_{\rm close}\} $ and  
 $X_{\rm init}$, $y_{\rm center}$ is an approximated center of the polytopic set $\underline{Y}_{k+1}$, and $\beta \in [0,1]$ is a tuning factor. 
In this paper, we use $\beta = 0.6$ for all continuous-state systems. 

\textbf{Refining the control set $U$}. 
Another reason for set $\underline{Y}_{k+1}$ becoming empty before the falsifying trajectory reaches $X_{\rm init}$ during backward expansion is the control input set $U$ being too large. 
In this case, it is hard for an output $y$ to satisfy the condition in the definition of $\underline{Y}_{k+1}$ (see Eq.  \eqref{eq:iter22}) for all $u \in U$. 
However, for the alternating backward expansion to be valid, it is sufficient for $y$ to satisfy the condition only for $u = \pi(y)$ (though this requires the knowledge of $\pi$ and is undesired). 
One way to mitigate this problem is to refine the control input space $U$ into finitely many smaller sets $\{U_i\}_{i=1}^n$ and replace 
$\underline{Y}_{k+1}$ by $\underline{Y}_{k+1, i} : = \textbf{EPre}_y(\underline{X}_k^\pi \mid U_i) = \{y\in Y \mid \forall u\in U_i: \exists x \sim y, w\in W: f\big(x, \pi(y), w\big) \in \underline{X}_{k}^\pi \}$.  
The following proposition can be easily proved under some continuity assumptions and suggests that a nonempty $\underline{Y}_{k+1, i}$ can be always found if the partition of $U$ is fine enough.  

\begin{prop}\label{prop:refine}
\normalfont
Suppose that $X, U$ are normed spaces and the mapping $f$ is continuous in $u$, and that there exists $y\in Y$, $x\sim y$ and $w \in W$ such that $f\big(x, \pi(y), w\big)$ is in the interior of $\underline{X}_{k}^\pi$. There exists $\delta > 0$ such that any $\delta$-fine partition $\{U_i\}_{i=1}^n$ of set $U$ (i.e., $U = \bigcup_{i=1}^n U_i$ and for any $u, u' \in U_i: \Vert u - u'\Vert\leq \delta$) contains a piece $U_{i'}$ s.t. $\underline{Y}_{k+1, i'} = \textbf{EPre}_y(\underline{X}_k^\pi \mid U_{i'})  \neq \emptyset$. 
\end{prop}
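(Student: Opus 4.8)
The plan is to reduce the statement to a routine $\epsilon$-$\delta$ continuity argument anchored at the witness supplied by the hypothesis. Let $\hat y$, $\hat x \sim \hat y$, $\hat w \in W$ be the triple guaranteed by the assumption, and set $\hat u := \pi(\hat y) \in U$, so that $f(\hat x, \hat u, \hat w)$ lies in the interior of $\underline{X}_k^\pi$. The key point I would exploit is that the single pair $(\hat x, \hat w)$ can serve as a common witness for every control value in a small neighborhood of $\hat u$; this uniformity is exactly why refining $U$ helps.

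First I would use the interior hypothesis: since $f(\hat x, \hat u, \hat w) \in \mathrm{int}(\underline{X}_k^\pi)$, there is $\epsilon > 0$ with $B(f(\hat x, \hat u, \hat w), \epsilon) \subseteq \underline{X}_k^\pi$. Next, invoking continuity of $f$ in $u$ at $\hat u$ (with $\hat x$ and $\hat w$ held fixed), I would extract $\delta > 0$ such that $\Vert u - \hat u\Vert \leq \delta$ forces $\Vert f(\hat x, u, \hat w) - f(\hat x, \hat u, \hat w)\Vert < \epsilon$ and hence $f(\hat x, u, \hat w) \in \underline{X}_k^\pi$. This $\delta$ is the one asserted in the statement.

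Now take any $\delta$-fine partition $\{U_i\}_{i=1}^n$. Since $U = \bigcup_{i} U_i$ and $\hat u \in U$, some piece $U_{i'}$ contains $\hat u$; by $\delta$-fineness every $u \in U_{i'}$ satisfies $\Vert u - \hat u\Vert \leq \delta$, so the previous step yields $f(\hat x, u, \hat w) \in \underline{X}_k^\pi$ for all $u \in U_{i'}$. It then remains to check $\hat y \in \underline{Y}_{k+1,i'}$ directly against the definition of $\textbf{EPre}_y(\underline{X}_k^\pi \mid U_{i'})$: for each $u \in U_{i'}$, the fixed pair $(\hat x, \hat w)$ satisfies $\hat x \sim \hat y$, $\hat w \in W$ and $f(\hat x, u, \hat w) \in \underline{X}_k^\pi$, which is precisely the condition defining membership. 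Hence $\underline{Y}_{k+1, i'} \neq \emptyset$.

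I do not expect a genuine obstacle here; the result is a continuity argument with no compactness needed. The one place that warrants care rather than difficulty is the quantifier order in the definition of $\underline{Y}_{k+1,i'}$, which demands a witness $(x, w)$ for every $u \in U_{i'}$. The argument closes only because continuity allows the same $(\hat x, \hat w)$ to work uniformly over the whole small piece; were the witness forced to depend on $u$ the interior hypothesis would be idle. A secondary, purely bookkeeping point is that $\delta$-fineness bounds only the mutual distances within a piece, so it is the membership $\hat u \in U_{i'}$ that upgrades this to the bound $\Vert u - \hat u\Vert \leq \delta$ used above.
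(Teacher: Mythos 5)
Your proposal is correct and follows essentially the same argument as the paper's proof: use the interior hypothesis to get an $\epsilon$-ball inside $\underline{X}_k^\pi$, use continuity of $f$ in $u$ (with the witness $x$, $w$ fixed) to get $\delta$, and then observe that the partition piece containing $\pi(y)$ keeps every $u$ within $\delta$ of $\pi(y)$, so the same $(x,w)$ witnesses $y \in \underline{Y}_{k+1,i'}$.
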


\begin{proof}
Let $y \in Y$, $x \sim y$ and $w \in W$ such that $f\big(x,\pi(y), w\big)$ is in the interior of $\underline{X}_{k}^\pi$, i.e., $\{x' \mid \Vert x' - f(x,\pi(y),w)\Vert \leq \epsilon\} \subseteq \underline{X}_{k}^\pi$ for some $\epsilon > 0$. Since $f$ is continuous in $u$, there exists $\delta$ s.t. $\Vert u - \pi(y)\Vert \leq \delta  \Rightarrow \Vert f(x,u,w) - f\big(x, \pi(y), w\big)\Vert \leq \epsilon$. Hence for all $u$ such that $\Vert u - \pi(y)\Vert \leq \delta$, $f(x,u,w) \in \underline{X}_k^\pi$. Let $\{U_i\}_{i=1}^n$ be any $\delta$-fine partition of set $U$ and let $\pi(y) \in U_{i'}$. One have $\Vert u - \pi(y)\Vert \leq \delta$ for all $u \in U_{i'}$. 
This implies that $y \in \underline{Y}_{k+1, i'}$ and thus $\underline{Y}_{k+1, i'} \neq \emptyset$. 
\end{proof}

\subsection{Overall Algorithm}
Algorithm \ref{alg:main} summarizes our falsification procedure, integrating the alternating iteration process, the query heuristic and the control space refinement.

\begin{algorithm}[h]
\caption{$(x_0, \textbf{w}_{T-1}, \textbf{v}_{T-1}) = {\textbf{FindAdversarialScenario}}(\Sigma, \pi,  X_{\rm init},  X_{\rm safe}, k_{\rm max}, \liren{\delta})$}
\begin{algorithmic}[1]
  \STATE Perform iteration \eqref{eq:bkwd_e1}, \eqref{eq:bkwd_e2} to get $\{\underline{D}_k\}_{k=0}^K$, where $\underline{D}_k \subseteq D_k$ ($\underline{D}_k \neq D_k$ because e.g., the union operation in Eq. \eqref{eq:bkwd_e2} is dropped to simplify the computation)
   \STATE Pick $\underline{X}_{0}^\pi \subseteq \bigcup_{k=1}^K \underline{D}_k$, set $k=0$
\WHILE{$ \underline{X}_{k}^\pi \cap X_{\rm init} = \emptyset$ \text{and} $k \leq k_{\rm max}$}
\STATE  \liren{$\mathcal{U} = \{U\}$} 
 \STATE \textbf{do} pick $U' \in \mathcal{U}$ , and  $\underline{y}_{k+1} \in  \textbf{EPre}_y(\underline{X}_k^\pi\mid U')$  using the heuristic in Eq. \eqref{eq:beta}
 \STATE  \ \ \ \ \textbf{if} $\exists u, u' \in U': \Vert u - u'\Vert > \delta$
\STATE \ \ \ \ \ \ \ split $U'$ into $U_1$ and $U_2$, $\mathcal{U}= (\mathcal{U} \setminus \{U'\})\cap \{U_1, U_2\}$
\STATE \ \ \ \ \textbf{endif}
\STATE \textbf{while} ($\underline{y}_{k+1}$ does not exist \textbf{or} $\pi(\underline{y}_{k+1})\notin U'$) \textbf{and} $\mathcal{U}\neq \emptyset$
\STATE $\underline{X}_{k+1}^\pi =  \textbf{Pre}^\pi_{\underline{y}_{k+1}}( \underline{X}_k^\pi)$ 
\STATE update $k = k+1$
\ENDWHILE
\STATE Set $N = k$,  generate $\textbf{x}_{N}, \textbf{w}_{N-1}, \textbf{v}_{N-1}$ using Eq. \eqref{eq:a_fwd_e1}-\eqref{eq:a_fwd_e3} 
\STATE From $x_N$, complete $\textbf{x}_{T}, \textbf{w}_{T-1}, \textbf{v}_{T-1}$ using Eq. \eqref{eq:d_fwd_e1}, \eqref{eq:d_fwd_e2}, with ``$\overline{\pi}(x_t)$'' replaced by $\pi\big(g(x_t, v_t)\big)$ where $v_N, \dots, v_{T-1}$ are picked randomly. 
\RETURN  $x_0, \textbf{w}_{T-1}, \textbf{v}_{T-1}$
\end{algorithmic}\label{alg:main}
\end{algorithm}

We briefly explain Algorithm \ref{alg:main}. 
Line 4-9 corresponds to the refinement technique introduced in Section \ref{sec:complete}. 
While in some applications, the control set $U$ may be physically constrained and it is possible to find $\underline{y}_{k+1} \in  \textbf{EPre}_y(\underline{X}_k^\pi\mid U)$, the refinement is important when $U$ is large (or even unconstrained). 
Algorithm \ref{alg:main} handles the latter case by splitting the control set $U$, saving different pieces in a queue $\mathcal{U}$ and finding $\underline{y}_{k+1} \in  \textbf{EPre}_y(\underline{X}_k^\pi\mid U')$ for each $U' \in \mathcal{U}$. 
By Proposition \ref{prop:refine}, under some continuity assumptions on $f$, it is possible to find $\underline{y}_{k+1}$ when the partition of $U$ is fine enough (quantified by the input parameter $\delta$). 
An example where the refinement is necessary to find a satisfying adversarial scenario (i.e., $x_0 \in X_{\rm init}$) can be found in Section \ref{sec:eval} (Example \ref{eg:lqr_unknown}). Note that, after splitting $U$, there might be multiple $\underline{y}_{k+1}$ induced from different parts of $U$, from where the expansion can proceed. 
In that case, it is possible to introduce a backtracking mechanism into our framework to manage the branching.  
This increases the computational complexity but can improve the completeness of the algorithm, especially when the ``greedy'' heuristic in \eqref{eq:beta} does not lead the search towards initial set $X_{\rm init}$ globally. 

In Algorithm \ref{alg:main}, the input parameters $k_{\rm max}$, $\delta$ can be adapted to different problems. 
In our examples, $k^{\rm max}$ is never reached and $\delta$ is only used in Example \ref{eg:lqr_unknown}. 
There are also several places where different heuristics can be used. 
One is the choice of $\underline{X}_0^\pi$ where the backward expansion starts. The set $\underline{X}_0^\pi$ should be large enough and closer to the initial set. In all of our examples, we pick $\underline{X}_0^\pi = \underline{D}_K$. However, in most of our examples, we find comparable results by picking $\underline{X}_0^\pi = X_{\rm unsafe}$. One can also adopt different heuristics for picking multiple $\underline{y}_{k+1}$'s. This will lead to branching in the backward expansion and the backtracking like ideas can be used to trace different branches.

\section{Extensions}\label{sec:ext}

The main development so far is an approach to falsify systems with relatively simple open-loop dynamics $f$, sensor model $g$, but with potentially complicated or even unknown static controller $\pi$, against invariance requirements. 
Here, we further show that our approach is extensible to
 i) systems whose dynamic model $f$ is complex but can be approximated by an abstraction with a simpler expression, 
ii) vision-based control systems, whose dynamics model $f$ is also simple but the sensor model $g$ and the controller $\pi$ are both complicated/unknown; and iii) against certain safety requirements expressed in a fragment of temporal logic. 
Each extension is illustrated with an example in Section \ref{sec:eval}.

\subsection{Incorporating Model Mismatch}\label{sec:ext_mismatch}
Since a simple plant model is assumed in our setting, it is important to incorporate the mismatch between our simple model and a high fidelity model, or even the real system. Our goal is to generate an adversarial scenario for the real system by falsifying the simple model, which serves as an abstraction of the real system. To this end, we consider simple models parametrized by model uncertainties $p$ and $q$ (their distinction will be discussed later). 
These model uncertainties $p$, $q$ are different from the disturbance $w$ and the noise $v$ that are picked adversarially in the falsification. Instead, $p$ and $q$ need to be picked in a way to respect the physics. 
This suggests another player called the neutral player selecting $p$, $q$ against the adversarial environment and independently of the controller. 
Let $x_{t+1} = f(x_t, u_t, w_t)$ be the concrete dynamics and let $x_{t+1} = \widehat{f}_{p_t, q_t}(x_t, u_t,w_t)$, where $p_t \in P$ and $q_t \in Q$, be an abstraction that has a simpler expression than $f$. 
The variable $p_t$ captures the terms in $f$ that can be determined after $x_t$ and $u_t$ are given, and these terms are known but hard to incorporate in backward reachable set computation.
Whereas $q_t$ captures the terms in $f$ that can be determined only after $w_t$ is given. 
For example, $f(x_t,u_t,w_t) = x_t + u_t + w_t + \log(x_t) + u_t^2 + u_t\cos(w_t)$ and $\widehat{f}_{p_t,q_t}(x_t,u_t,w_t) = x_t + w_t + u_t + p_t + q_t$ with $p_t \in P: = \{\log(x) + u^2\mid x \in X, u \in U\}$ and $q_t \in Q:=\{u\cos(w)\mid w \in W, u\in U\}$. 
To construct an adversarial scenario for the concrete system by falsifying the abstraction, we require the following condition to hold for $f$ and $\widehat{f}_{p,q}$: 
\begin{align}
\forall x\in X, u\in U: \exists p \in P: \forall w \in W:  \exists q \in Q: f(x,u,w) = \widehat{f}_{p,q}(x,u,w). 
\label{eq:relation}
\end{align}
The reason for treating $p$ and $q$ differently roots in the order of play in the game, i.e., the adversarial environment plays after the controller. When the adversarial environment is choosing $w$, the value of $p$ is already determined by $x$ and $u$, whereas the value of $q$ remains undetermined. Therefore the choice of $w$ may depend on $p$ but not $q$. For instance, $p$ can represent known nonlinearities that are abstracted out in backward reachability but whose value can be easily computed given $x$ and $u$, whereas $q$ represents unknown but bounded model uncertainties.
In each round of the game, the state is at $x_t$ and the adversarial environment picks $v_t$ that leads to $y_t = g(x_t, v_t)$ and $u_t = \pi(y_t)$. Next the neutral player picks $p_t$ so that 
\begin{align}
\forall w \in W:  \exists q \in Q: f(x_t,u_t,w) = \widehat{f}_{p_t,q}(x_t,u_t,w). 
\end{align} 
We know such $p_t$ exists by \eqref{eq:relation}. In the above example, the neutral player picks $p_t = \log(x_t) + u_t^2$. 
Then the adversarial environment picks $w_t$ according to the value of $x_t$, $u_t$ and $p_t$.  
Finally the neutral player picks $q_t$ so that $f(x_t, u_t, w_t) = \widehat{f}_{p_t, q_t}(x_t, u_t, w_t)$. In the above example, the neutral player picks $q_t = u_t \cos(w_t)$. 
The alternating iteration is modified by replacing \eqref{eq:iter22}, \eqref{eq:iter24}  by 
\begin{align}
    \underline{Y}_{k+1} & = \{y \in Y\,\vert\,\forall u \in U: \exists x\sim y: \forall  p\in P: \exists w \in W: \forall q \in Q: \widehat{f}_{p,q}(x,u,w) \in  \underline{X}_k^\pi \}, \label{eq:iter32} \\
    \underline{X}_{k+1}^\pi & = \big\{x \in X  \big\vert x\sim \underline{y}_{k+1}, \forall p \in P: \exists w \in W: \forall q \in Q: \widehat{f}_{p,q}\big(x, \pi(\underline{y}_{k+1}),w\big)\in \underline{X}_k^\pi \big\}. \label{eq:iter34}
\end{align}
The following proposition says that an adversarial scenario for the abstraction also falsifies the concrete system.  

\begin{prop}\label{prop:model_mismatch}
Replace \eqref{eq:iter22} by \eqref{eq:iter32}, and \eqref{eq:iter24} by \eqref{eq:iter34}, and let $X_k^\pi$ be defined by \eqref{eq:iter112} with the concrete dynamics. We have $\underline{X}_{k}^\pi \subseteq X_k^\pi$.
\end{prop}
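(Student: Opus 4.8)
The plan is to mirror the inductive argument used in the proof of Theorem~\ref{thm:main}, the only genuinely new ingredient being the compatibility relation \eqref{eq:relation}, which lets me translate a statement about the abstraction $\widehat{f}_{p,q}$ into one about the concrete map $f$. I would again induct on $k$. The base case $\underline{X}_0^\pi \subseteq D = X_0^\pi$ is unchanged, since $\underline{X}_0^\pi$ is still chosen inside $D$ and the definition of $X_0^\pi$ in \eqref{eq:iter111} does not involve the dynamics at all.

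For the inductive step, I would assume $\underline{X}_k^\pi \subseteq X_k^\pi$ and fix an arbitrary $x \in \underline{X}_{k+1}^\pi$, aiming to exhibit a concrete falsifying witness showing $x \in X_{k+1}^\pi = \textbf{Pre}^\pi(X_k^\pi)$. Write $u = \pi(\underline{y}_{k+1})$. Membership of $x$ in $\underline{X}_{k+1}^\pi$ gives $x \sim \underline{y}_{k+1}$ together with the clause $\forall p \in P: \exists w \in W: \forall q \in Q: \widehat{f}_{p,q}(x, u, w) \in \underline{X}_k^\pi$ coming from \eqref{eq:iter34}. On the other hand, the compatibility relation \eqref{eq:relation} applied to this same $x$ and $u$ supplies a witness $p^\ast \in P$ enjoying the property $\forall w \in W: \exists q \in Q: f(x,u,w) = \widehat{f}_{p^\ast,q}(x,u,w)$.

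The heart of the argument is to chain these two nested quantifier statements in the correct order. First I would instantiate the clause from $\underline{X}_{k+1}^\pi$ at $p = p^\ast$; this yields a disturbance $w^\ast \in W$ (depending on $p^\ast$) such that $\widehat{f}_{p^\ast, q}(x, u, w^\ast) \in \underline{X}_k^\pi$ for every $q \in Q$. Next I would instantiate the relation \eqref{eq:relation} at this particular $w = w^\ast$, producing some $q^\ast \in Q$ with $f(x, u, w^\ast) = \widehat{f}_{p^\ast, q^\ast}(x, u, w^\ast)$. Combining the two and invoking the induction hypothesis gives $f(x,u,w^\ast) = \widehat{f}_{p^\ast, q^\ast}(x,u,w^\ast) \in \underline{X}_k^\pi \subseteq X_k^\pi$. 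Since $\underline{y}_{k+1} \sim x$ and $u = \pi(\underline{y}_{k+1})$, the pair $(\underline{y}_{k+1}, w^\ast)$ is exactly the existential witness required by \eqref{eq:iter112}, so $x \in X_{k+1}^\pi$, completing the induction.

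I expect the only delicate point to be bookkeeping the order of play encoded in the two quantifier prefixes: the definition \eqref{eq:iter34} has the shape $\forall p\,\exists w\,\forall q$, while the relation \eqref{eq:relation} has the shape $\exists p\,\forall w\,\exists q$. These are precisely complementary, so the witnesses can be extracted in the order $p^\ast$ (from the relation, before $w$ is chosen), then $w^\ast$ (from the definition, using $p^\ast$), then $q^\ast$ (from the relation, using $w^\ast$), matching exactly the neutral-player/adversary alternation described before the proposition. Getting this chaining direction right, rather than any computation, is the whole content of the argument; were one to attempt to fix $w$ before $p$, the dependency of $w^\ast$ on $p^\ast$ and of $q^\ast$ on $w^\ast$ would break and no concrete witness could be recovered.
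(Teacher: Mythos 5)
Your proof is correct and follows essentially the same route as the paper: your quantifier-chaining step (extract $p^\ast$ from \eqref{eq:relation} before $w$ is chosen, then $w^\ast$ from the $\forall p\,\exists w\,\forall q$ clause of \eqref{eq:iter34} instantiated at $p=p^\ast$, then $q^\ast$ from \eqref{eq:relation} instantiated at $w=w^\ast$) is exactly the content of the paper's Lemma~\ref{lem:A1}, which you have simply inlined into the induction. The only cosmetic difference is that the paper additionally records the output-space inclusion $\underline{Y}_{k+1}\subseteq Y_{k+1}^\pi$ and routes the conclusion through the auxiliary sets $XY_{k+1}^\pi$ and $\widetilde{X}_{k+1}^\pi$ from the proof of Theorem~\ref{thm:main}, whereas you verify membership in $X_{k+1}^\pi$ directly from \eqref{eq:iter112}; both are valid.
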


To prove Proposition \ref{prop:model_mismatch}, we first introduce a useful lemma. 
\begin{lem}\label{lem:A1}
\normalfont 
Let $x' \in X$, $u' \in U$ and $X' \subseteq X$ be arbitrary. Suppose that \eqref{eq:relation} holds, i.e., 
\begin{align}
\exists p' \in P: \forall w \in W: \exists q \in Q: f(x',u',w) = \widehat{f}_{p',q}(x',u',w), \label{eq:A1}
\end{align}
and that 
\begin{align}
\forall p \in P: \exists w \in W: \forall q \in Q: \widehat{f}_{p,q}(x',u',w) \in X'. \label{eq:A2}
\end{align}
Then there exists $w' \in W: f(x',u',w') \in X'$. 
\end{lem}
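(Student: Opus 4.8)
The plan is to prove the existence of the desired $w'$ by chaining the two hypotheses through a careful sequence of quantifier instantiations, exploiting the specific order in which the quantifiers appear. First I would extract from \eqref{eq:A1} the witness $p' \in P$ guaranteed by its leading existential quantifier; this $p'$ has the property that for every $w \in W$ there exists some $q \in Q$ realizing the equality $f(x',u',w) = \widehat{f}_{p',q}(x',u',w)$.

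Next I would feed this particular $p'$ into hypothesis \eqref{eq:A2}. Since \eqref{eq:A2} holds for all $p \in P$, instantiating it at $p = p'$ produces a disturbance $w' \in W$ such that $\widehat{f}_{p',q}(x',u',w') \in X'$ holds for every $q \in Q$. This $w'$ is precisely the candidate witness I claim satisfies the conclusion of the lemma.

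It then remains only to verify that $f(x',u',w') \in X'$ for this chosen $w'$. Here I would return to \eqref{eq:A1}, now instantiating its universal quantifier over $w$ at the specific value $w = w'$. This yields a particular $q' \in Q$ for which $f(x',u',w') = \widehat{f}_{p',q'}(x',u',w')$. The crucial observation is that the membership conclusion obtained in the previous paragraph held for \emph{all} $q \in Q$, so in particular it applies to this $q'$, giving $\widehat{f}_{p',q'}(x',u',w') \in X'$. Combining the two facts yields $f(x',u',w') = \widehat{f}_{p',q'}(x',u',w') \in X'$, completing the argument.

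The main subtlety --- and the step most easily mishandled --- is arranging the instantiation order so that the quantifiers interlock correctly: $p'$ must be chosen first from \eqref{eq:A1}, then used to extract $w'$ from \eqref{eq:A2}, and only afterward is $q'$ extracted from \eqref{eq:A1} at $w'$. The universal $\forall q$ appearing in \eqref{eq:A2} is indispensable, since without it the specific $q'$ produced by \eqref{eq:A1} could fall outside the range of $q$ for which membership in $X'$ was guaranteed. The proof is purely logical (a two-move quantifier argument) and involves no computation; the only care required is respecting this dependency chain rather than reversing it.
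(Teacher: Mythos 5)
Your proof is correct and follows exactly the same route as the paper's: instantiate \eqref{eq:A2} at the witness $p'$ from \eqref{eq:A1} to obtain $w'$, then instantiate \eqref{eq:A1} at $w'$ to obtain $q'$, and use the universality over $q$ in \eqref{eq:A2} to conclude $f(x',u',w') = \widehat{f}_{p',q'}(x',u',w') \in X'$. No gaps; the quantifier-dependency chain you emphasize is precisely the content of the paper's argument.
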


\begin{proof}
Let $p = p'$  in \eqref{eq:A2}, we have 
\begin{align}
\exists w' \in W: \forall q \in Q: \widehat{f}_{p',q}(x',u',w') \in X'. \label{eq:A3}
\end{align}
Combining \eqref{eq:A1} with \eqref{eq:A3} yields
\begin{align}
\exists q' \in Q: f(x',u',w') = \widehat{f}_{p',q'}(x',u',w'). \label{eq:A4}
\end{align}
Eq. \eqref{eq:A3} gives  $\widehat{f}_{p',q'}(x',u',w') \in X'$, which implies that $f(x',u',w') \in X'$ by \eqref{eq:A4}.  
\end{proof}

Proposition \ref{prop:model_mismatch} is proved below. 
\begin{proof} We will define $XY_{k+1}^\pi$, $Y_{k+1}^\pi$ and $\widetilde{X}_{k+1}^\pi$ by \eqref{eq:XYpi}-\eqref{eq:Xpi}. It remains to show that, given $\underline{X}_{k}^\pi \subseteq X_k^\pi$,  i) the newly defined $\underline{Y}_{k+1} \subseteq Y_{k+1}^\pi$ and ii) the newly defined $\underline{X}_{k+1}^\pi \subseteq \widetilde{X}_{k+1}^\pi$. 
Then Proposition \ref{prop:model_mismatch} follows from the same induction that proves Theorem 1. 
To prove i), let $y' \in \underline{Y}_{k+1}$ and $u'=\pi(y')$. By the new definition of $\underline{Y}_{k+1}$, we know that there exists $x' \sim y'$ such that Eq. \eqref{eq:A2} holds with $X' = \underline{X}_{k}^\pi$. By Lemma \ref{lem:A1}, there exists $w' \in W$ such that $f(x',u',w') \in X' =  \underline{X}_{k}^\pi \subseteq X_k^\pi$, which implies that $y' \in Y_{k+1}^\pi$. To prove ii), let $x' \in \underline{X}_{k+1}^\pi$ be arbitrary and let $u' = \pi(\underline{y}_{k+1})$. By the new definition of $\underline{X}_{k+1}^\pi$, we know that $x' \sim \underline{y}_{k+1}$ and  \eqref{eq:A2} holds with $X' = \underline{X}_{k}^\pi$. 
Again, by Lemma \ref{lem:A1}, there exists $w'$ such that $f(x',u',w') \in X' =  \underline{X}_{k}^\pi \subseteq X_k^\pi$. That is, $x' \in \widetilde{X}_{k+1}^\pi$. 
\end{proof}

For switched affine systems in Eq. \eqref{eq:syst_swa}, \eqref{eq:sysm_swa} whose $B_s$ and $K_s$ matrices are linear in $p \in P = \text{cvxh}(\mathcal{V}_P)$ and $q \in Q = \text{cvxh}(\mathcal{V}_Q)$, the updated iteration \eqref{eq:iter32}, \eqref{eq:iter32}  still lead to polytopic computations similar to Eq. \eqref{eq:YkP}, \eqref{eq:XkP}. 
Then, in the forward expansion phase (Eq. \eqref{eq:a_fwd_e3}), $p_t$ is picked so that Eq. \eqref{eq:relation} holds, and then $w_t$ is picked so that $\widehat{f}_{p_t, q}\big(x_t, \pi(y_t), w_t\big) \in \underline{X}^\pi_{N-t-1}$ for all $q \in Q$, which is feasible by \eqref{eq:iter32}, \eqref{eq:iter34}. Finally,  the system evolves to the new state $x_{t+1} = f(x_t, u_t, w_t)$.

\subsection{Falsifying Vision-Based Controllers}\label{sec:ext_vision}
\begin{figure}[h]
\centering
    \includegraphics[width=0.7\textwidth]{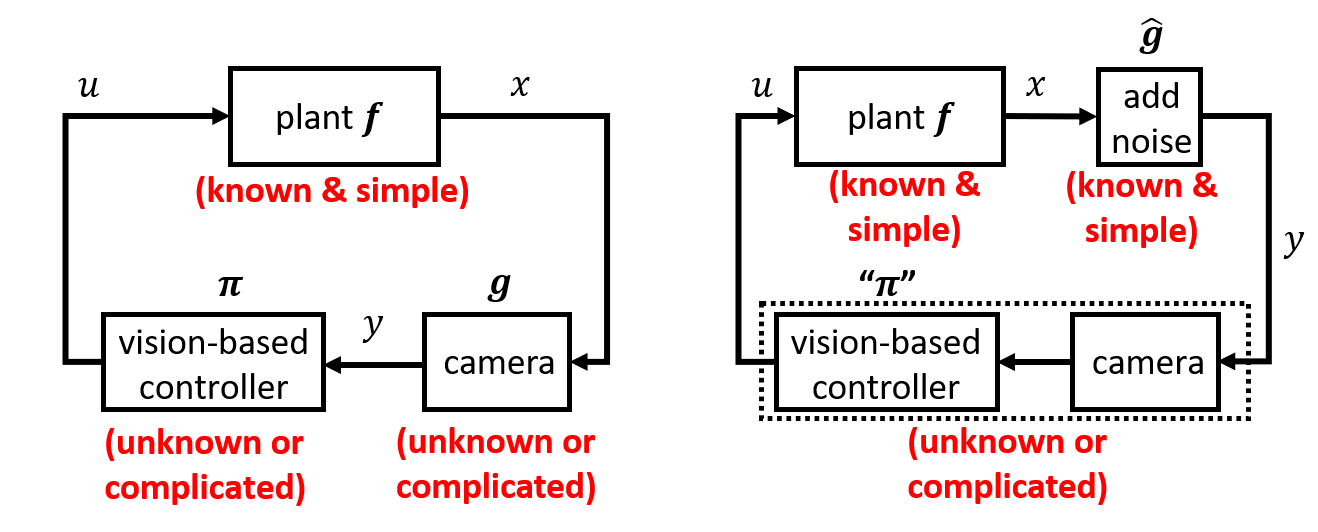}
      \caption{Block diagram of vision-in-the-loop systems with noisy images (left), and with perturbation in a low dimensional semantic space (right).}
    \label{fig:blk_1}
\end{figure}
The block diagram of a typical vision-in-the-loop system is shown in Figure \ref{fig:blk_1} (left).  
The main challenge of falsifying such systems is twofold. 
First, since $y$ represents an image, the dimension of the output space is too high \cite{balakrishnan2019specifying}, \cite{dreossi2019compositional}. 
Second, the camera is hard to model formally and hence it is impossible to perform the proposed alternating iteration (particularly Eq.  (26), (28)).  
To tackle these two challenges, we proceed as in  \cite{dreossi2019compositional} and restrict the search to a low dimensional semantic space. To this end, we consider the system shown in Figure \ref{fig:blk_1} (right) instead, where the complicated vision-based controller and the camera model are viewed as one block (i.e., the $\pi$-block), while the sensor noise is captured by a simple $\widehat{g}$-block that perturbs the real state $x$ (e.g., $y \in \widehat{g}(x) = x + \widehat{v}$, where $\widehat{v} \in \widehat{V}$ is the admissible perturbation).  
Here, the output $y$ has the same dimension as $x$ and the map $g$ is simple, and is hence amiable to our approach. 
The scheme can be interpreted as ``attacking" the system in Figure \ref{fig:blk_1} (left) by replacing the true image by an imitating one  that is taken at a perturbed state, and the perturbation can found by falsifying the system in Figure \ref{fig:blk_1}  (right).

The following proposition validates our approach by showing that, under certain Lipschiz assumption on the camera model\footnote{The real cameras are rarely Lipschiz continuous in the state, but such assumptions are used  in analysis of perception-based controllers, e.g., see \cite{dean2020guaranteeing}. }, 
the falsifying images obtained by semantic space perturbation (i.e., falsify the system with $\widehat{g}$) can be generated by the real camera model $g$. This means we can find an adversarial scenario for the real vision-in-the-loop system by falsifying $\widehat{g}$ in the loop instead. 

\begin{prop}\label{prop:vision}
\normalfont 
Let the state space $X$ be a normed space and $g(x,v) = c(x)  + v$, where $c: X \rightarrow \mathbb{R}^{n_{\rm img}}$ models a camera that maps a state to an image and $V \subseteq \mathbb{R}^{n_{\rm img}}$ is a set of additive noise in the image space. 
Assume that $c$ is globally Lipschiz (i.e., $\exists L \geq 0:  \forall x, x' \in X: \Vert c(x) - c(x') \Vert \leq L \Vert x - x'\Vert$) and that the origin is in the interior of $V$. 
Define $\widehat{g}(x, \widehat{v}): = x  + \widehat{v}$. 
There exists $\widehat{V} \subseteq X$ such that for all $x$, any image generated by $\widehat{g}$ can be also generated by $g$ under some noise $v$, i.e., $ \big\{c\big(\widehat{g}(x, \widehat{v})\big)\mid \widehat{v} \in \widehat{V}\big\}\subseteq \{g(x,v)\mid v \in V\}$. 
\end{prop}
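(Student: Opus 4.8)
The plan is to exploit the global Lipschitz bound to turn a state-space perturbation into a bounded image-space noise, and then use the interior hypothesis to guarantee that this noise actually lands inside $V$. First I would unfold the containment explicitly: since $\widehat{g}(x,\widehat{v}) = x + \widehat{v}$ and $g(x,v) = c(x) + v$, the image $c\big(\widehat{g}(x,\widehat{v})\big) = c(x+\widehat{v})$ belongs to the set $\{g(x,v)\mid v\in V\} = \{c(x)+v \mid v\in V\}$ precisely when the residual $v := c(x+\widehat{v}) - c(x)$ lies in $V$. So the entire statement reduces to bounding this residual uniformly over all states $x$.

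Next I would invoke the interior assumption: because the origin lies in the interior of $V$, there is a radius $r > 0$ with $\{v \in \mathbb{R}^{n_{\rm img}} \mid \Vert v\Vert \leq r\} \subseteq V$. The global Lipschitz bound then gives $\Vert c(x+\widehat{v}) - c(x)\Vert \leq L\Vert \widehat{v}\Vert$ for every $x$ and every $\widehat{v}$. Defining $\widehat{V} := \{\widehat{v}\in X \mid \Vert\widehat{v}\Vert \leq r/L\}$ in the case $L > 0$ forces the residual to satisfy $\Vert v\Vert \leq L\cdot(r/L) = r$, hence $v\in V$, which yields the claimed inclusion. The degenerate case $L = 0$ is trivial: $c$ is then constant, the residual is identically zero (and $0\in V$), so any bounded $\widehat{V}$ works, and I would dispatch it in a single line.

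The crucial feature that makes the argument go through is that $\widehat{V}$ is chosen \emph{independently of} $x$; this is exactly what the global (rather than merely local) Lipschitz constant buys, and it is why the containment holds uniformly for all states as the proposition demands. I do not expect a genuine obstacle here—the proof is a short chain linking the explicit forms of $g$ and $\widehat{g}$, the Lipschitz estimate, and the openness of $V$ at the origin. The only care required is to state $\widehat{V}$ \emph{before} verifying the inclusion for an arbitrary fixed $x$, and to handle the $L = 0$ case separately so that the radius $r/L$ is well defined.
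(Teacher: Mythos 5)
Your proof is correct and follows essentially the same route as the paper's: extract a ball of radius $r$ inside $V$ from the interior assumption, take $\widehat{V}$ to be the ball of radius $r/L$ in the state space, and let the global Lipschitz bound push state perturbations into admissible image-space noise. The only difference is that you explicitly dispatch the degenerate case $L=0$, which the paper's proof glosses over (its radius $\epsilon/L$ is undefined there); this is a minor but welcome point of extra rigor, not a different argument.
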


\begin{proof}
Since $V$ contains the origin, there exists $\epsilon$ such that $\{v \in \mathbb{R}^{n_{\rm img}} \mid \Vert v \Vert \leq \epsilon\} \subseteq V$. Define $\widehat{V} = \{\widehat{v} \in X \mid \Vert \widehat{v} \Vert\leq \tfrac{\epsilon}{L}\}$, and let $x \in X$ be arbitrary. We have 
\begin{align}
& \big\{c\big(\widehat{g}(x, \widehat{v})\big) \mid \widehat{v} \in \widehat{V}\big\} = \{c(x + \widehat{v})  \mid \Vert \widehat{v} \Vert\leq \tfrac{\epsilon}{L}\} = \{c(x') \mid \Vert x' - x \Vert \leq  \tfrac{\epsilon}{L}\}  \nonumber \\
 \subseteq & \{c(x') \mid \Vert c(x') - c(x) \Vert \leq \epsilon \} = \{c(x) + v \mid \Vert v \Vert \leq \epsilon \} \subseteq \{g(x,v) \mid v \in V\}. 
\end{align}
\end{proof}
An example with an end-to-end controller can be found in Section \ref{sec:eval} (Example \ref{eg:vision}).

\subsection{Falsification against Finite-Time Reach-Avoid Specification}\label{sec:ext_reach}
Our approach can be extended to falsify systems against finite-time reach-avoid specifications defined by the following signal temporal logic (STL) formula \vspace{-0mm}
\begin{align}
\Box (x \in X_{\rm safe}) \wedge \Diamond_{\li 0, t_{\rm max}\ri} (x \in X_{\rm target}), \label{eq:reach_avoid}
\end{align}
where $\Box$ and $\Diamond$ are temporal operators ``always'' and ``eventually'', and $X_{\rm target}$ is a set of target states that we wish to reach at some time  $t \in \li 0, t_{\rm max}\ri$. For the detailed semantics of STL, we refer the readers to \cite{donze2010robust}. Eq. \eqref{eq:reach_avoid} specifies a safety property, i.e., any violation of the property occurs in finite time. 
To falsify a system against specification \eqref{eq:reach_avoid}, 
we construct the following system with an augmenting state $z$ that captures the timing dynamics: \vspace{-1.5mm}
\begin{align} 
x_{t+1} = f(x_t, u_t, w_t), \ \ z_{t+1} = z_t + 1.
\end{align}
The falsification problem can be converted to the one that has already been solved in Section \ref{sec:sol} in the $(x,z)$-space, with the initial set $\Xi_{\rm init} : = X_{\rm init} \times \{0\}$  and unsafe set $\Xi_{\rm unsafe} = \Xi_{\rm unsafe}^1 \cup \Xi_{\rm unsafe}^2$, where $\Xi_{\rm unsafe}^1 : =  X_{\rm unsafe} \times \li 0, \infty \ri$ corresponds to violation of state invariance, and $\Xi_{\rm unsafe}^2 := (X\setminus X_{\rm target}) \times \li t_{\rm max}, \infty\ri$ corresponds to missing the deadline of reaching the target set. 
To apply the proposed alternating backward expansion algorithm, we start from $\Xi_{\rm unsafe}^1$ and $\Xi_{\rm unsafe}^2$ separately. 
In particular, when expanding from $\Xi_{\rm unsafe}^2$ backwards, we also restrict $x \in X\setminus X_{\rm target}$ so that, in the falsifying trajectory, the target set is never visited before $z = t_{\rm max}$.
An example can be found in Section \ref{sec:eval} (Example \ref{eg:reach_avoid}).

\section{Examples and Evaluations}\label{sec:eval}

\liren{We present multiple examples to illustrate the efficacy of our approach and its extensions. 
The computation time for these examples are reported. 
We also evaluate the approach against randomly generated problem instances. 
}

\begin{eg}\label{eg:cont}
\normalfont
Consider a 2D discrete-time linear system in the form of Eq. \eqref{eq:syst_swa}, \eqref{eq:sysm_swa} (no switching input $s$), with $A = E = C = F =  I_2$, $B = [1 \ 0]^\top$, $D = [0 \ 0]^\top$ and $K = [0 \ -1]^\top$.  
The control set $U=[-1.2, 1.2]$, the noise set $V=[-1,1]\times[-0.1,0.1]$ and the disturbance set $W=\{w \in \mathbb{R}^2 \mid \Vert w \Vert_1 \leq 0.2\}$. 
The state moves downwards and the goal is to avoid the unsafe set $X_{\rm unsafe} = [3,17]\times [0,3]$ from the initial set $X_{\rm init}=[0,20]\times [17,20]$. 
We falsify the following four controllers whose expressions are unknown to the algorithm: 
\begin{align}
    \pi_1(y) & = \begin{cases}
    -1.2 & \text{ if } y \in [0,10]\times[0,20]\\
    1.2 & \text{ if } y \in (10,20]\times[0,20]
    \end{cases}, 
    \ \ \ \ \ \ \pi_2(y) = \begin{cases}
    -1.2 & \text{ if } y^2 > y^1 \\
    1.2 & \text{ otherwise } 
    \end{cases}, \nonumber \\
     \pi_3(y) & = \begin{cases}
	-1.2 & \text{ if } (y^2\leq 15-y^1 \wedge y^2\geq y^1) \vee y^1 \leq 10\\
	1.2 & \text{ otherwise }
	\end{cases}, \ \ \begin{array}{c}\pi_{\rm NN}  :  \\ \text{feedforward }\\ \text{neural net}\end{array}. 
\end{align}
where $y^1$, $y^2$ are the horizontal and vertical coordinates of $y$ respectively. 
The perfect information dual winning set is inner approximated by $\bigcup_{k=1}\underline{D}_k$ (purple) where $\underline{D}_k$ is the polytopic set computed by the backward expansion algorithm in Eq.  \eqref{eq:bkwd_e1}, \eqref{eq:bkwd_e2} by ignoring the union operation. 
We then start the alternating backward expansion with $\underline{X}_0^\pi = \underline{D}_{M}\subseteq D$ where $M =\max \{k\mid \underline{D}_k \neq \emptyset\}$. 
Fig. \ref{fig:eg4} shows the obtained sets (upper) and the induced falsification runs (lower). 
The sets $\underline{X}_k^\pi$ (gray solid), $\underline{Y}_k^\pi$ (blue transparent) are  polytopes. 
We  obtain different falsification  trajectories $\textbf{x}_T$ (black solid) and output sequences $\textbf{y}_T$ (blue dotted), specific to each controller by running the same algorithm blindly. The controllers are only sampled along the output sequences $\textbf{y}_T$. 
\hfill\(\dagger\)
\end{eg}

\begin{figure}[h]
 \centering
    \includegraphics[width=0.8\textwidth]{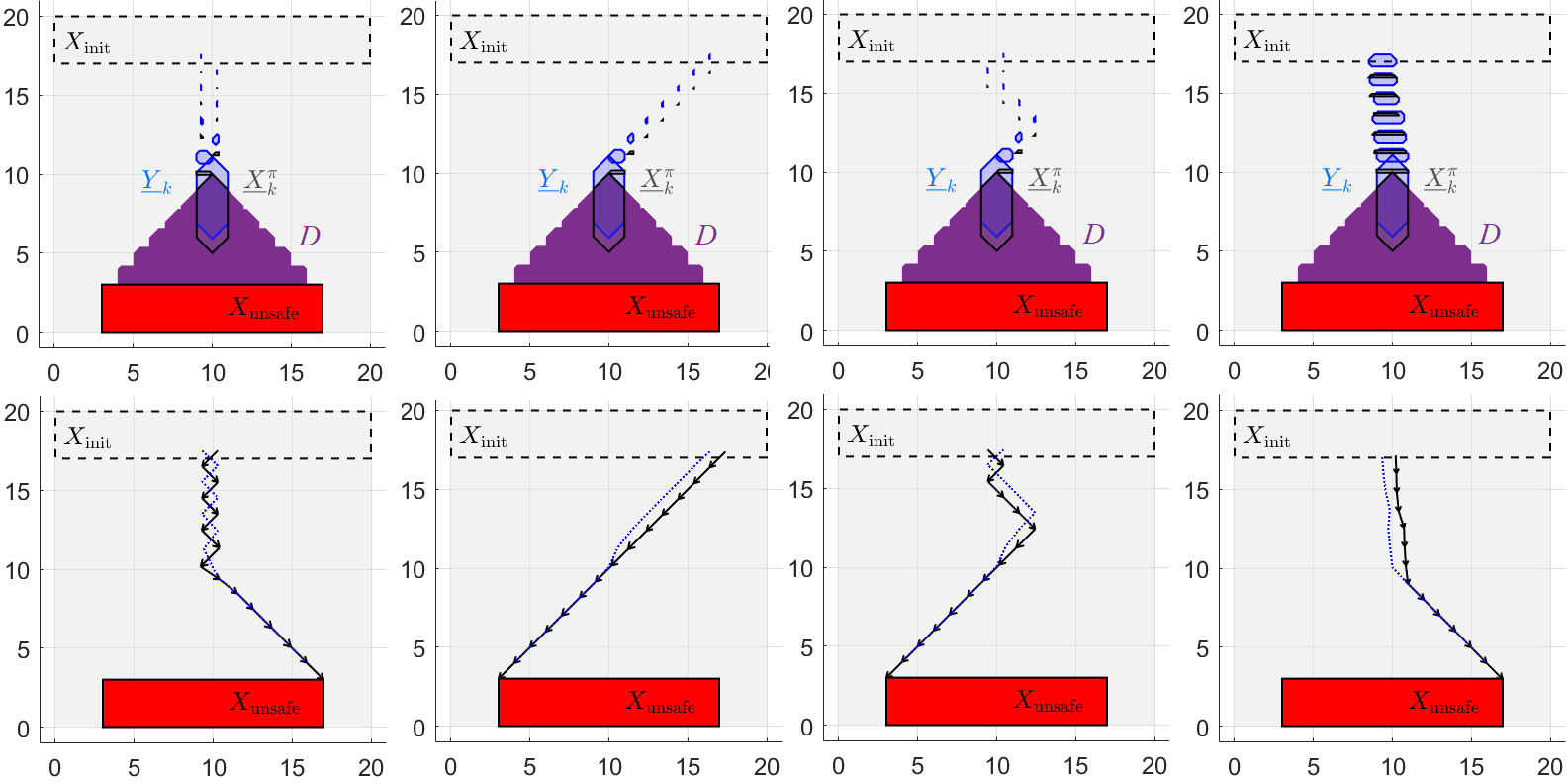}
    \caption{Example \ref{eg:cont}, falsification under controller $\pi_1$ (1$^{\rm st}$ column),   $\pi_2$  (2$^{\rm nd}$ column), $\pi_3$  (3$^{\rm rd}$ column) and $\pi_{\rm NN}$  (4$^{\rm th}$ column). }
    \label{fig:eg4}
\end{figure}

One notable and interesting phenomenon is that the controller-specific adversarial scenarios where the imperfect information plays a key role usually occur near the ``decision boundary'' of the controller, where a discontinuous decision is made. 
In Example \ref{eg:cont}, to cross the obstacle $X_{\rm unsafe}$, a controller needs to decide to go either left or right. In a typical adversarial scenario, the true state is on one side but is also very close to the decision boundary, while the sensor noise makes the controller mistakenly believe that the state is on the other side and the ``opposite'' control action should be taken. This results in chattering behavior along the  decision boundary until the state hits the dual winning set $D$, from where safety violation is assured by the adversarial process noise only. 
To summarize, such  scenarios can be found at places where the closed-loop system's vector field is discontinuous and these scenarios are hence controller-specific. 
Our approach provides a way to detect such critical cases without knowing the explicit expression of the controllers. 

Note that such decision boundary is in general inevitable given static controllers and an operational domain that is not simply connected. 
For example, it is pointed out in \cite{sontag1999stability} that the region of attraction of an asymptotically stable equilibrium cannot have holes if the closed-loop system's vector field is continuous. Therefore, the operational domain of a stabilizing controller having holes (i.e., $X_{\rm unsafe}$) will imply discontinuous closed-loop dynamics, and hence the existence of a ``decision boundary".

\begin{eg}\label{eg:lqr_unknown}
\normalfont 
(Section 4.3)
We provide an example where the control space refinement is necessary to falsify the system. 
The system is in the form of Eq. \eqref{eq:syst_swa}, \eqref{eq:sysm_swa} (no switching input $s$), with 
\begin{align}
& A = 
\left[
\begin{array}{cc}
0.9745 & 0.2132 \\
0.002547  & 1.151 \\
\end{array}
\right],  
\ \  B = \left[\begin{array}{c} 0.01959 \\0.1961  \end{array}\right], 
\ \  E = \left[\begin{array}{c} 0.01959 \\-0.04509  \end{array}\right],  
\end{align}
$C = F =  I_2$, $D = K = [0 \ 0]^\top$. 
The control set $U=[-1.5,1.5]$, the noise set $V=[-0.1,0.1]\times[-0.1,0.1]$ and the disturbance set $W=[-0.2,0.2]$. The unsafe set $X_{\rm unsafe} = [1,2]\times[-0.5,0.5]$ and the initial set $X_{\rm init}=[-0.15,0.15]\times[-0.15,0.15]$. 
The system is stabilized by an LQR controller with saturation $\pi_{\rm LQR} = \min\{\max\{[-1\ -1]x, -1.5\}, 1.5\}$ (unknown to the falsification algorithm). 
Fig. \ref{fig:eg_lqr_unknown} shows the falsifying trajectory (left) found by Algorithm \ref{alg:main}. 
This falsifying scenario is nontrivial because the uncertainty profiles (right) need to follow a special periodic-pattern. 
If the control space refinement were not used, the alternating iteration would terminate prematurely before reaching the initial set backwardly. Essentially, $\pi_{\rm LQR}$ is falsified because it is not using the full capability allowed by the control set $U = [-1.5, 1.5]$.  
\hfill\(\dagger\)
\end{eg}
\begin{figure}[h]
 \centering
\vspace{-0mm}
    \includegraphics[width=0.9\textwidth]{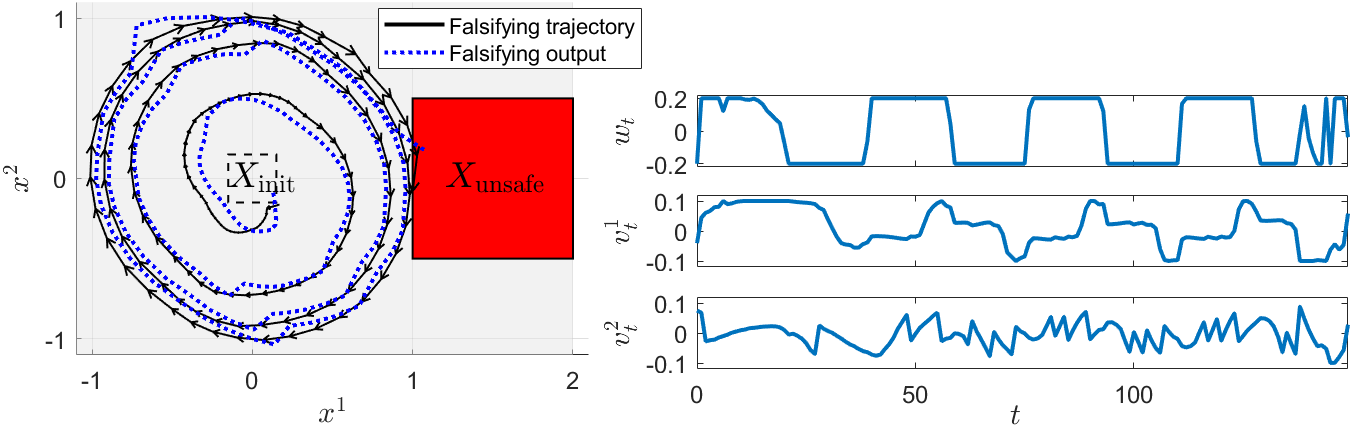}
\vspace{-0mm}	
    \caption{Example \ref{eg:lqr_unknown}: falsifying trajectory (left), disturbance and noise profiles (right). }
    \label{fig:eg_lqr_unknown}
\vspace{-0mm}	
\end{figure}

\begin{eg}\label{eg:swa}
\normalfont 
We apply our approach to the following buck converter system (adopted from \cite{yang2019correct}). 
\begin{figure}[h]
    \includegraphics[width=0.55\textwidth]{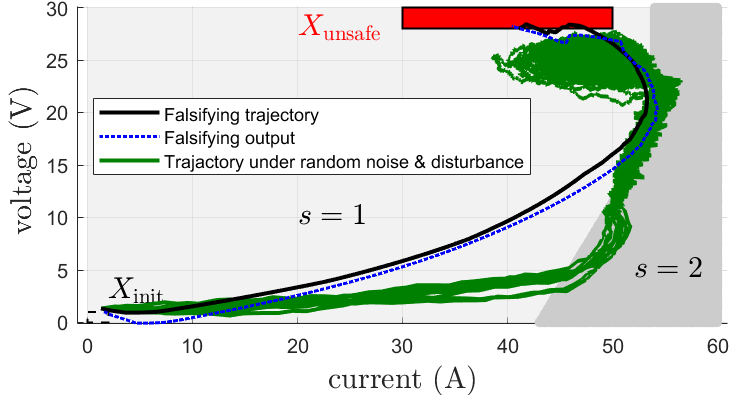}	
      \caption{Example \ref{eg:swa}. }
    \label{fig:eg_swa}
\end{figure}
The system's states are the current and the voltage of the converter. 
The system has two discrete inputs only, i.e., $s \in \{1,2\}$, corresponding to the converter's dynamics regularized with two different on-off sequences. Particularly, 
\begin{align}
& A_1 = 
\left[
\begin{array}{cc}
0.9999 & -0.04980 \\
0.003984 & 0.9919 \\
\end{array}
\right],
K_1= \left[\begin{array}{c} 1.250\\0.002618 \end{array}\right],  
 A_2 = 
\left[
\begin{array}{cc}
0.9992 & -0.1360 \\
0.01088 & 0.9775 \\
\end{array}
\right], 
K_2 =  \left[\begin{array}{c} 1.250\\0.006948\end{array}\right],  
\end{align}
and $E_1 = E_2 = C = F = I_2$, $B = [0 \ 0]^\top$. 
The noise set $V = [-1,1]\times[-1,1]$ and the disturbance set $W = [-0.2, 0.2]\times[-0.2, 0.2]$. 
The initial set  $X_{\rm init} = [0,2]\times[0,1]$. 
The unsafe set $X_{\rm unsafe} = [30,50]\times[28,30]$ corresponds to the overshoot in the converter's voltage. 
The control objective is to reach a target voltage level ($25$ V) and stay close indefinitely. To this end, the following rule-based switching controller is used: 
\begin{align}
\pi(y) = \begin{cases}
2 & \text{ if } \big((\widehat{y}^2 \leq23.5) \wedge(-23\widehat{y}^1+14\widehat{y}^2 \leq -989)\big)  \vee \big((\widehat{y}^2 > 23.5)\wedge (\widehat{y}^1 \geq 54)\big) \\
1 & \text{ otherwise } \\
\end{cases}
\end{align}
where $\widehat{y} =\min\big\{\max\{y,[0 \ 0]^\top\},[60 \ 25]^\top\big\}$.   
The switching rule is depicted in Fig. \ref{fig:eg_swa}: $s = 1$ is used in the light gray area whereas $s = 2$ is used in the dark gray area. 
There is a decision boundary along which the closed-loop dynamics is discontinuous. However, different from Example \ref{eg:cont}, the vector field points towards this decision boundary. 
Fig. \ref{fig:eg_swa} shows ten closed-loop trajectories (green solid) under random disturbance and noise. 
These  trajectories tend to first reach the decision boundary, chatter and slide along it until the target voltage level is achieved,
and eventually leave the decision boundary and stay close to the target state. Particularly, the unsafe region is avoided in all the ten simulations. 
Due to the chattering behavior along the switching surface, it is in general difficult to use verification tools (e.g., {\tt{flow*}} \cite{chen2013flow}) based on reachable set over-approximation. We are able to find a falsifying trajectory (black solid) using our approach, which has only query access to $\pi$. Unlike Example \ref{eg:cont}, this falsifying trajectory tends to avoid the switching surface. \hfill\(\dagger\)
\end{eg}

\begin{eg}\label{eg:car}
\normalfont  (Section \ref{sec:ext_mismatch}) 
We illustrate our approach's capability to deal with model mismatch by an intersection management problem with two autonomous cars. 
The considered scenario is illustrated with Fig. \ref{fig:eg_car}. Two identical autonomous cars approach an uncontrolled intersection at the same time, and coordinate with each other to cross the intersection safely (i.e., without collision) and optimally (i.e., minimizing the time of crossing) by solving the same hybrid MPC problem on their own computational platform. A rough model of the other car is known to each car so that they can predict the other's behavior and react accordingly. 
The states $ [h_1, d_1, h_2, d_2]^\top$, where $h_i$ is the $i^{\rm th}$ car's position relative to the center of the intersection  and $d_i$ is its speed, evolve with the following dynamics \cite{nilsson2015correct}: 
\begin{align}
\dot{h}_i = d_i + \delta d_i, \ \ \dot{d}_i = \tfrac{1}{m}(F_i - f_{0} - f_{1}d_i - f_{2}d_i^2 + \delta F_i).  
\end{align}
The control inputs $[F_1, F_2]^\top$ are the forces acting on the two cars. 
The disturbances $[ \delta d_1,  \delta F_1, \delta d_2,  \delta F_2]^\top$ are state dependent because they cannot move the car backwards. 
The above continuous-time dynamics is discretized with $0.2$s and linearized. 
The nonlinear term $f_{2}d_i^2$ is captured with an additive model uncertainty $p$, whereas $q$ is not needed in this example. 
The detailed parameters can be found in Table. \ref{tab:car_para}. 
The unsafe set $X_{\rm unsafe} = [-2,2]\times[5,20]\times[-2,2]\times[5,20]$ corresponds to both cars being within $2$m to the intersection center but the speeds being both larger than $5$m/s. 
The initial set $X_{\rm init} = [-50, -40]\times[0,20]\times[-50, -40]\times[0,20]$. 
The observation $y$ consists of an estimate of the two cars' states. Here, the two cars do not communicate, so they have a better estimation of their own states but a coarser estimate of the other car's state. Hence $y \in \mathbb{R}^8$ and $y = [y^1, y^2]$ where $y^i=[\widehat{h}^i_1, \widehat{d}^i_1, \widehat{h}^i_2, \widehat{d}^i_1]^\top = x  + v^i$ is the observation of car $i$. 
We assume that  $v^1 \in V^1 = [-0.25,0.25]\times[-0.25, 0.25]\times[-1,1]\times[-0.5, 0.5]$ and $v^2 \in V^2 = [-1,1]\times[-0.5,0.5]\times[-0.25,0.25]\times[-0.25, 0.25]$. 
The MPC controller $\pi_{\rm MPC}$ to falsify essentially contains two identical MPC formulations, one is solved on car 1 with observation $y^1$, and the other solved on car 2 with observation $y^2$. To minimize the crossing time,  
$\pi_{\rm MPC}$ will assign priority to the car that may cross the intersection sooner and ask the other car to slow down to avoid collision. Under randomly generated process and measurement noise,  
the controller experiences no safety violation in dozens of simulations. 
Our approach finds a falsification trajectory where a collision occurs. Fig. \ref{fig:eg_car} plots the true and observed position of the two cars.  The falsification scenario is interpreted as follows: the positions and the velocities of two cars are identical, but the observation makes each car believe that it is the one with priority. Thus both cars accelerate to the intersection and collide. 
\begin{table}[h]
  \caption{Parameter values of the model in Examples 5}
\centering
{\small	\begin{tabular}{c | c | c }
	\hline
	Parameter & Description & Value \\
	\hline\hline
	 $m$  & mass of cars &  1462 (kg) \\
\hline
       $f_0$  & friction/drag term &   51  (N)  \\
\hline
	$f_1$  &   friction/drag term  &  1.2567  (Ns/m) \\
\hline
	$f_2$  &   friction/drag term  &  0.4342(N$\text{r}^2$/$\text{m}^2$)   \\
\hline
	\end{tabular} 
}
\label{tab:car_para}
\end{table}
\hfill\(\dagger\)
\end{eg}

\begin{figure}
\centering
    \includegraphics[width=0.5\textwidth]{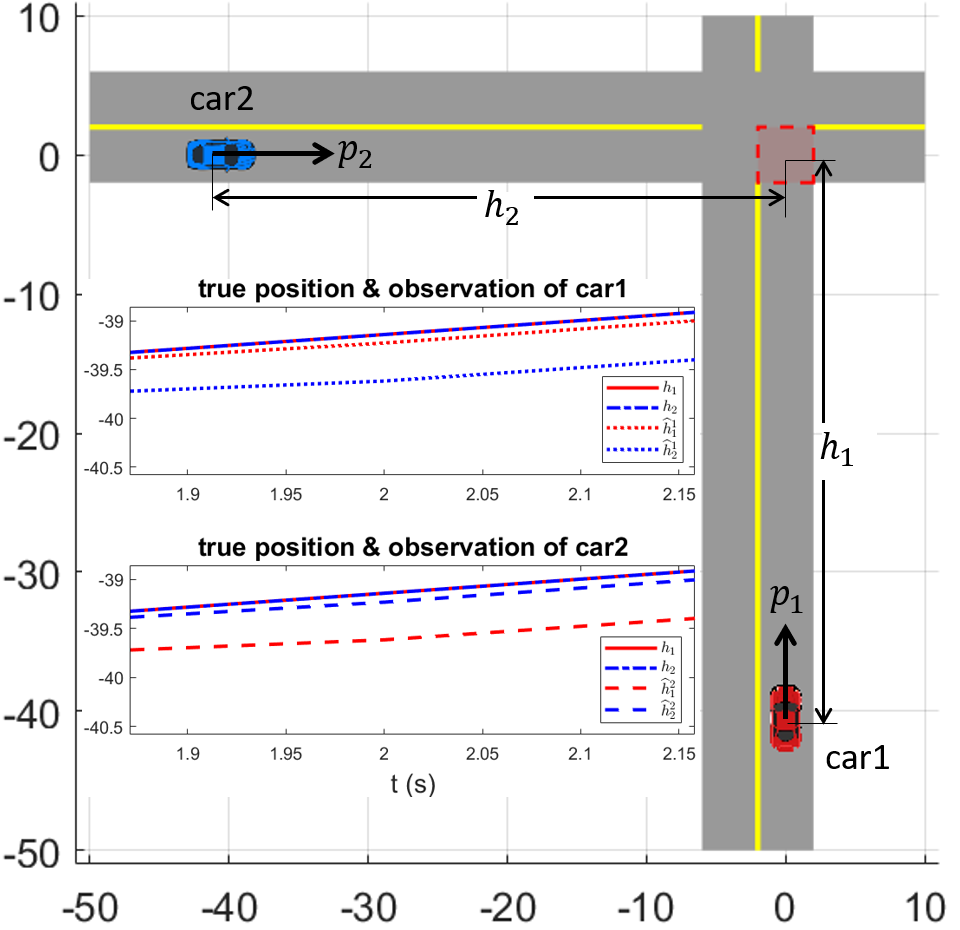}
      \caption{Intersection management: two autonomous cars.}
\label{fig:eg_car}
\end{figure}

\begin{eg}\label{eg:vision}
\normalfont  (Section \ref{sec:ext_vision}) 
Consider a self-driving car with linear lateral and longitudinal dynamics in the form of Eq. \eqref{eq:syst_swa}, \eqref{eq:sysm_swa} (no switching input $s$), with 
\begin{align}
A & = \left[
\begin{array}{cc}
1 & 0\\
0 & 1\\
\end{array}
\right], 
B = \left[
 \begin{array}{c}
0.1\\
0\\
 \end{array}
\right],
E = \left[
\begin{array}{ccc}
0.1 & 0\\
0 & 0.1\\
\end{array}
\right], 
K = \left[
\begin{array}{c}
0\\
1.5\\
\end{array}
\right]. 
\end{align}
The control set $U = [-1.8,1.8]$ and the disturbance set $W = [-0.1, 0.1]\times[-0.1, 0.1]\times[-0.05, 0.05]$.  
The observation is a camera image that is generated by {\tt{Carla}} \cite{Dosovitskiy17}. 
An end-to-end controller $\pi_{\rm NN}$ is implemented by a feedforward neural network, whose input is a low resolution image and output is the control $u$. The approximated noise set $\widehat{V} = [-0.17,0.17]\times[-0.05,0.05]$. 
The controller's objective is to avoid a steady red truck ($X_{\rm unsafe} = [-1.5, 1.5]\times[0,5]$) from the initial set $X_{\rm init} = [-2,2]\times[-30,-29]$ without deceleration. 
The neural network is trained with a set of demonstrations (i.e., image-control input pairs) generated by a hybrid MPC controller \cite{levine2016end}. 
The obtained falsifying scenario is shown in Figure \ref{fig:eg_vision} (left). 
The green solid line represents a typical closed-loop trajectory under random noise and disturbance. In this case the truck (red box) is avoided. 
However, a falsifying trajectory can be found in the middle of the domain. 
Since the neural  network tries to interpolate the behavior of the demonstrating MPC 
\begin{figure}
\centering
\vspace{-0mm}
    \includegraphics[width=0.55\textwidth]{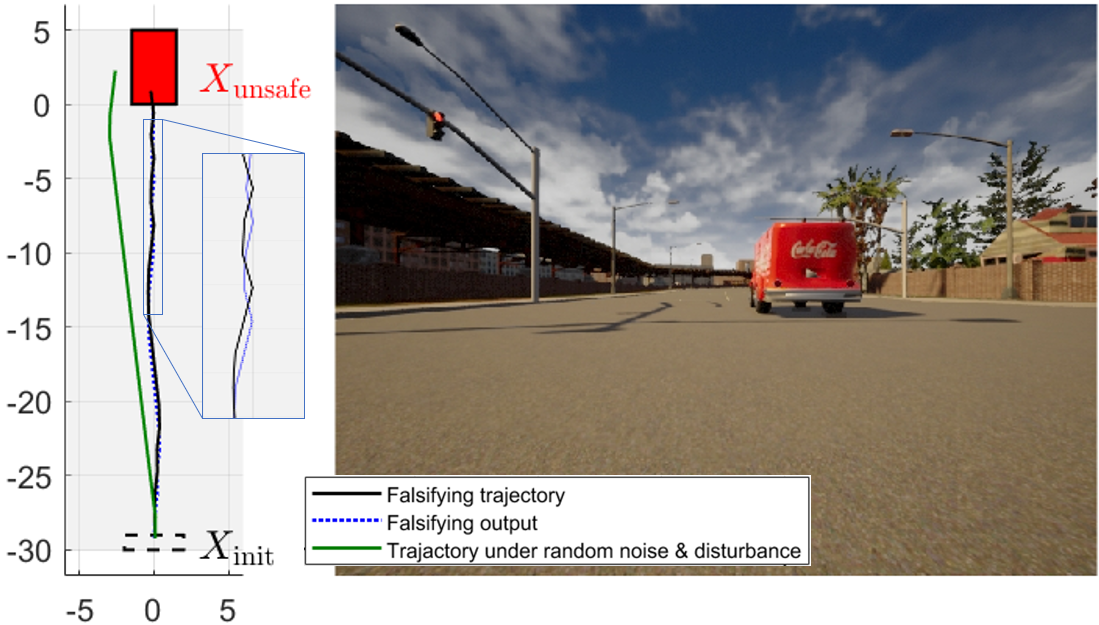}
	\vspace{-0mm}
      \caption{Example \ref{eg:vision}. Left: falsifying trajectory. Right: a typical image generated by Carla along the trajectory.}
    \label{fig:eg_vision}
 \vspace{-0mm}
\end{figure}controller, it tends to choose a positive lateral speed if it is more likely to cross the obstacle from the right, and choose a negative lateral speed otherwise. Therefore, although $\pi_{\rm NN}$ is continuous, it has a sharp change of decision in the middle of the domain, and small noise can cause undesired behavior (see the zoomed-in plot).  
A video can be found at \url{https://youtu.be/QeBoqsdmCU4}. 
\hfill\(\dagger\)
\end{eg}

\begin{eg}\label{eg:reach_avoid}
 \normalfont  (Section \ref{sec:ext_reach}) 
Consider a 2D linear system in the form of Eq. \eqref{eq:syst_swa}, \eqref{eq:sysm_swa}. 
\begin{figure}[b]
    \includegraphics[width=0.55\textwidth]{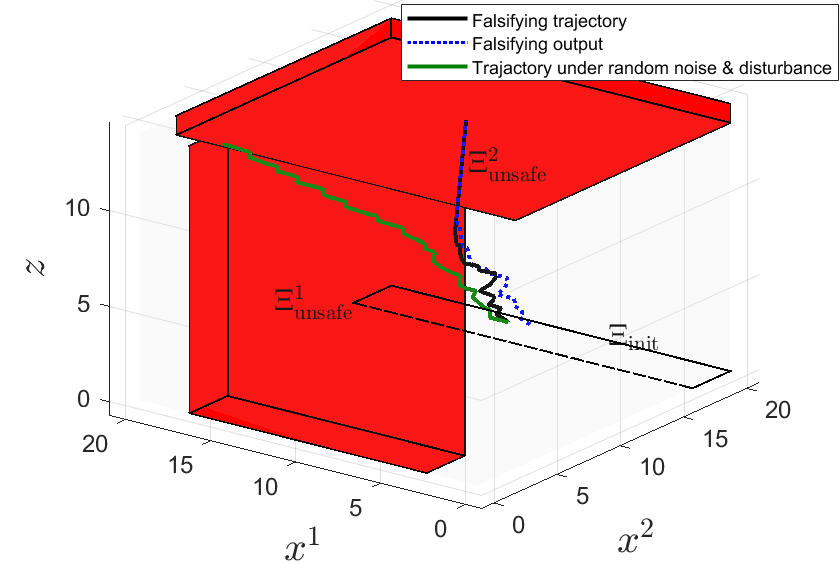}
      \caption{Example \ref{eg:reach_avoid}.}
    \label{fig:eg_reach_avoid}
 \end{figure}
where $A = I$, $B = E = 0.25I$ and $K = [0 \ 0]^\top$. The control set $U = [-1,-1]\times[-2,2]$, the disturbance set $W = [-0.1,0.1]\times[-0.3, 0.3]$, and the noise set $V = [-1,1]\times[-1,1]$. The initial set $X_{\rm init} = [0,20]\times[17,20]$, the unsafe set $X_{\rm unsafe} = [3,17]\times[0,3]$ and the target set $X_{\rm target} = [0,20]\times[0,3]$. 
The settings are similar to Example \ref{eg:cont} except that now there is i) a target set and ii) a ``braking'' action, which allows the state to stay away from the unsafe set.
However, braking will prevent the state moving towards the target set,  and the deadline $t_{\rm max}$ for reaching the target may be missed. We use a hybrid MPC controller $\pi_{\rm MPC}$ that minimizes the sum of distance from the predicted states to the target set while enforcing $x_t \notin X_{\rm unsafe}$ and $x_{t^\star}\in X_{\rm target}$ for some $t^\star \in \li 0, t_{\rm max}\ri$. 
 The controller is static and decides to go left/right and accelerate/decelerate at each point. 
A falsification scenario is found via backward expansion from set $\Xi_{\rm unsafe}^2$. 
Figure \ref{fig:eg_reach_avoid} shows the results. 
The green solid line represents a typical trajectory under random noise and disturbance, which satisfies the finite reach-avoid specification. The first half of the falsifying trajectory (black solid) ``chatters'' in the middle of the state space under the adversarial observation sequence (blue dotted), until the state is too close to  $X_{\rm unsafe}$ to cross it. The controller then decides to brake and 
cannot reach the target set by the deadline $t_{\rm max}$. This corresponds to the second half of the falsifying trajectory hitting $\Xi_{\rm unsafe}^2$. 
\hfill\(\dagger\)
\end{eg}

\subsection{Computation Time for the Examples \ref{eg:cont}-\ref{eg:reach_avoid}}
In this short section, we report the computation  time to find a satisfying adversarial scenario for Examples \ref{eg:cont}-\ref{eg:reach_avoid} in the paper. 
All the experiments are run on a 1.80 GHz machine with 16 GB RAM. Table \ref{tab:eva1} summarizes the result for each example. The second column (Dim $x$) is the dimension of the state space $X$, the third column (Dim $y$) is the dimension of the observation space $Y$, and the fourth column $T$ is the length of the obtained safety-violating trajectory. 
The computation time grows with the dimension of the state space and observation space, and also the length of the falsifying trajectory.  
We also apply the falsification tool {\tt{s-taliro}} to some examples. 
Except for Example \ref{eg:cont} with controller $\pi_2$,  {\tt{s-taliro}}
could not find any falsifying trajectories after running for hours (e.g., $10^5$ iterations for Example \ref{eg:lqr_unknown}). 
The results for {\tt{s-taliro}} is given in the last column, where $l$ is the number of iterations, $t$ is the time and $r$ is the robustness value ($r < 0$ means that the system is falsified). 
For Example \ref{eg:car} and \ref{eg:reach_avoid}, simulations with the hybrid MPC controller in the loop are very time-consuming,  hence we could only run small number of iterations.

We note that a comparison with {\tt{s-taliro}} 
is not fair in that {\tt{s-taliro}} works with black-box dynamics, hence aims to solve a harder problem. Our purpose with this comparison is to show that the obtained adversarial examples are nontrivial to find and to demonstrate the importance of using the plant model whenever it is available. 
\begin{table}[t]
  \caption{Evaluation of Algorithm \ref{alg:main} against Examples \ref{eg:cont}-\ref{eg:reach_avoid}}
\label{tab:eva1}
\centering
{\small	\begin{tabular}{c || c | c | c | c | r   l | c }
	\hline
	Example  & Dim $x$     &   Dim $y$  & $T$ &  Controller type &  CPU & time (s)  & {\tt{s-taliro}} \\
\cline{6-7}
	 index  &       &    & (\# steps) &    &  \multicolumn{1}{|c|}{Backward} &  \multicolumn{1}{|c|}{Forward} & ($l$, $t$, $r$) \\
\hline
\hline
      &   &  & 15 & Rule-based $\pi_1$ & \multicolumn{1}{|c|}{18.56} &  \multicolumn{1}{|c|}{5.09}   & ($10^4$,4226, 5.1)\\
      \cline{4-8} 
      \ref{eg:cont}   & $2$  & $2$ & 15 & Rule-based $\pi_2$ & \multicolumn{1}{|c|}{17.95} &  \multicolumn{1}{|c|}{5.06}  & ($10^4$,1098,-3e-3)\\
	\cline{4-8} 
       &    &  & 15 & Rule-based $\pi_3$  & \multicolumn{1}{|c|}{17.41} &  \multicolumn{1}{|c|}{4.56}  & ($10^4$,3085,5.2)\\
	\cline{4-8} 
      &    & & 14 & Neural net $\pi_{\rm NN}$ & \multicolumn{1}{|c|}{17.08} &  \multicolumn{1}{|c|}{4.30}  & ($10^3$,5768,0.11)\\
	\hline
	\ref{eg:lqr_unknown}   & $2$  & $2$ & 115  & LQR $\pi_{\rm LQR}$ &   \multicolumn{1}{|c|}{4096.28} &  \multicolumn{1}{|c|}{38.85} & ($10^5$,1.9e4,0.8)\\ 
	\hline 
	\ref{eg:swa}   & $2$  & $2$ &132  & Rule-based & \multicolumn{1}{|c|}{263.15} &  \multicolumn{1}{|c|}{46.63} & ($10^4$,1e4,4.1) \\ 
	\hline 
	\ref{eg:car}   & $4$  & $8$ & 27 & Hybrid MPC $\pi_{\rm MPC}$ & \multicolumn{1}{|c|}{1640.45} &  \multicolumn{1}{|c|}{83.12} & (50, 4.6e4, 29.4)\\
	\hline 
	\ref{eg:vision}   & $2$  & image: $3072$ & 28 & Neural net $\pi_{\rm NN}$ & \multicolumn{1}{|c|}{39.41} &  \multicolumn{1}{|c|}{1.44} & - \\
	    &   & semantic: $2$ &  & (end-to-end) & \multicolumn{1}{|c|}{ } &  \multicolumn{1}{|c|}{ } & \\
	\hline 
      \ref{eg:reach_avoid}   & $3$  & $2$ & 50 & Hybrid MPC $\pi_{\rm MPC}$ & \multicolumn{1}{|c|}{182.41} &  \multicolumn{1}{|c|}{67.25} & ($10^2$,4.4e4,0.75)\\  
	\hline 
	\end{tabular} 
}
\end{table}

\subsection{Evaluation with Randomly Generated Instances}

We generate random problem instances and test the efficacy and scalability of our approach. 
We consider integrator-like dynamics of different dimension with small model mismatch. The output map is assumed to be $y = x + v$. 
The tested 2D systems are single integrators on a plane, with small randomly generated model mismatch. 
The mismatch is in the form of an additive quadratic term in the state $x$, which is bounded by a small number.  
The 6D systems are double integrators in a 3D space, and the 10D systems are double integrators in a 3D space with a 4D stable but uncontrollable subspace. 
In all problem instances, the objective is to reach a polytopic target set from a randomly generated rectangular initial set $X_{\rm init}$, while avoiding multiple random rectangular sets ($X_{\rm unsafe}$ is the union of these rectangles). 
Since the safe domain is nonconvex, we use hybrid MPC controllers.

Table \ref{tab:eva2} shows the results. 
In the obtained adversarial scenarios, since the trajectories hit $X_{\rm unsafe}$ by construction, the falsification is considered successful  (marked as ``\checkmark'') if the safety-violating trajectory starts from $X_{\rm init}$, 
whereas the ``$\times$'' mark indicates that the obtained safety-violating trajectory is from $x_0 \notin X_{\rm init}$. 
For each problem instance, we also run ten simulations from the same initial state $x_0$ under random noise and disturbance profiles.  The number of violations among these ten random simulations is reported in the fifth column of Table \ref{tab:eva2}.    
While the initial state $x_0$ is not always in $X_{\rm init}$, random simulations therefrom still tend to be safe in many cases (e.g., 2D case, instances 4, 6, 8-10, 10D case, instance 2).  
This suggests that the scenarios we found, though not from $X_{\rm init}$, are still nontrivial. 
Moreover, the time of finding an adversarial scenario is small comparing to the time of random simulations. 
One reason is because the hybrid MPC controller  solves a sequence of MILPs, which is time consuming.
In fact, a considerable portion of the falsification time is devoted to the query step. 
While the  time for set manipulation grows with the plant's dimension $n_x$, the overall falsification time also depends on the length of the falsifying trajectory and the time for controller query. 
Our zonotope-based implementation works for 10D dynamics, while the polytope-based implementation (using MPT3) can hardly go beyond 4D.

\begin{table}[]
  \caption{Evaluation of Algorithm \ref{alg:main} against Random Instances}
\label{tab:eva2}
\centering
{\small	
\begin{tabular}{c|c|c|c|c|c}
\hline
               Dim $x$   &  Instance & success & falsification & \# violations &  CPU time of\\  
                ($n_x$)  & index & (from $X_{\rm init}$) & CPU time& under random $v,w$  & 10 simulations \\ 
                  & &  &  (s)  & (out of 10)  &  (s)   \\ \hline\hline
2 & 1 & \checkmark & 874 & 1 & 6799 \\  \cline{2-6} 
(polytope)  & 2 & \checkmark & 1018 & 0 & 8571 \\ \cline{2-6} 
 & 3 & \checkmark & 719& 0 & 9275 \\ \cline{2-6} 
 & 4 & $\times$ & 650 & 0 & 4917 \\ \cline{2-6} 
  & 5 & \checkmark & 1278 & 0 & 10970 \\ \cline{2-6} 
  & 6 & $\times$ & 663  & 0 & 7628 \\ \cline{2-6} 
  & 7 & \checkmark & 695 & 0 & 19470 \\ \cline{2-6} 
  & 8 & $\times$ & 639 & 0 & 17164 \\ \cline{2-6} 
  & 9 & $\times$ & 766 & 0 & 16329 \\ \cline{2-6} 
  & 10 & $\times$ & 1716 & 0 & 23420 \\ \cline{2-6} 
  & 11 & \checkmark & 1991  & 3 & 14952 \\ \hline
		6 & 1 & \checkmark & 1860 & 0 & 80283 \\ \cline{2-6} 
		 (zonotope) & 2 &  \checkmark &  1602  & 0 & 21957 \\  \cline{2-6} 
		  & 3& \checkmark  & 281  & 0 & 26205 \\ \cline{2-6} 
               & 4 & $\times$ & 745 & 2 & 8281\\ \cline{2-6} 
               & 5 & \checkmark & 2841 &  0 & 17920\\ \hline
		10 & 1 & \checkmark & 542  & 0 & 22026 \\ \cline{2-6}
	(zonotope)	  & 2 & $\times$ & 1045 & 0 & 10147 \\ \hline
\end{tabular}
}
\end{table}

\section{Conclusion and Future Work}
We considered the problem of falsifying safety properties for systems with static controllers that do not have access to perfect state information. 
A synthesis-guided falsification approach was proposed to find a class of ``interesting" falsifying scenarios specific to the given controller. 
The approach was shown to be applicable to systems with relatively simple dynamics but complicated or even unknown controllers, including MPC and neural network controllers. We presented extensions of our approach to falsify vision-in-the-loop systems against finite-time reach-avoid specifications. 
Some computational aspects of the proposed approach were also discussed. 
We believe that our approach nicely complements the black-box approaches from falsification and white-box approaches from adversarial learning. 
For future work, we will explore the falsification problem i) against more complicated specifications 
and ii) for other classes of dynamics, either using abstraction-based techniques or nonlinear backward reachable set computations.

\begin{acks}
This work is supported in part by ONR grant \# N00014-18-1-2501.
\end{acks}

\bibliographystyle{ACM-Reference-Format}
\bibliography{main}

\end{document}